\newcommand{\mathsym}[1]{{}}
\newcommand{\unicode}[1]{{}}
\newtheorem {lemma}{Lemma}
\newtheorem {theorem}{Theorem}
\newtheorem{remark}{Remark}
\newcommand{\dd}{\mathrm{d}}
\newcommand{\Q}{\mathbb{Q}}
\newcommand{\QT}{\mathbb{Q}_{\rm T}}
\newcommand{\T}{{\rm T}}
\newcommand{\E}{{\mathbb{E}}}
\newcommand{\Var}{\mathsf{Var}}
\newcommand{\Cov}{\mathsf{Cov}}
\newcommand{\R}{\mathbb{R}}
\newcommand{\eps}{\varepsilon}
\newcommand{\A}{{\cal A}}
\newcommand{\e}{{\bf e}}
\begin{document}
\title{
Explicit local volatility formula for Cheyette-type   interest rate models
}
\author{Alexander Gairat,\footnote{Sberbank, Moscow, Russia. Email address: asgayrat@sberbank.ru}
Vyacheslav Gorovoy\footnote{
	New Economic School, Moscow, Russia.
	Email address: vgorovoy@nes.ru
}\, and 
Vadim Shcherbakov\footnote{
	Royal Holloway,  University of London, Egham, UK.
	Email address: vadim.shcherbakov@rhul.ac.uk
}
}

\date{}
\maketitle


\abstract{
{\small    
This paper addresses the approximation of the local volatility function in the Cheyette interest rate model.
Its main contribution is an explicit analytical formula for approximating local volatility, derived by extending the classical Dupire framework to interest rate models. In particular, an implicit Dupire-like expression for local volatility is first derived  for options written on the short rate. This expression is then approximated using a combination of perturbation methods and probabilistic techniques, resulting in a 
formula expressed in terms of time and strike derivatives of the Bachelier implied variance.
The final formula naturally extends to multi-factor Cheyette models and provides a
 practical tool for model calibration.
}
}

\bigskip

\noindent {{\bf Keywords:} \,  interest  rate models, Cheyette model,  local volatility,  
 Dupire’s formula, 
 options on short rate, swaptions,  model calibration,  perturbation expansion
 }

\section{Introduction}
\label{intro}
 The Cheyette model and its modifications  are well known and widely used
  by both practitioners and researchers.
  These models are  valued for 
  their mathematical tractability, which enables their  efficient numerical implementation.
The original Cheyette model is a single-factor quasi-Gaussian HJM model
 with the time dependent deterministic diffusion coefficient (volatility). 
 A known limitation of the standard Cheyette model, similar to that of basic models
  in equity and foreign exchange (FX) markets (e.g., the Black–Scholes model),
   is its inability to capture market smiles and skews in implied volatilities.
 One way  to deal with this drawback of basic models 
 is to consider their  local volatility extensions, where 
the volatility is  a function of both the  time and the state variable.

Local volatility models offer a practical and arbitrage-free framework for 
capturing the implied volatility smile
 and skew observed in the market. A widely used tool for reconstructing the local volatility surface is Dupire’s 
 formula, which relies on partial derivatives of option prices with respect to strike and maturity.
Dupire-like implicit formulas for local volatility in interest rate models have been studied previously
 (see, e.g., Cao and Henry-Labordère~\cite{CL}, Gatarek et al.~\cite{GJQ}, Lucic \cite{L}, Savine~\cite{Savine}
  and references therein). 
 This paper contributes to the literature by deriving an explicit analytical formula for approximating the local 
 volatility function in the Cheyette model.

 It should be noted that applying Dupire’s classical framework to fixed-income markets is not straightforward. 
Unlike equity and FX markets, where options on the same underlying asset exist across various maturities, 
fixed-income markets typically feature instruments known as rolling maturity options, that is, options whose 
underlying rate instruments evolve over time as their maturities change. For example, swaptions with different 
maturity profiles, even if based on the same tenor swap, effectively have different underlying swap rates.

The main results of this paper are formulated for options on the short rate and involve the corresponding short-rate implied total variance surface. Although such options are not traded, they provide a natural and convenient framework for constructing a local-volatility representation.
In practice, however, the interest-rate options market primarily consists of swaptions rather than short-rate 
options. 
Attempting to construct an analogue of Dupire’s formula directly from swaption smiles would require 
differentiating prices of payoffs that depend nonlinearly on the entire bond curve at expiry and, in the Cheyette 
model, on the two-dimensional state of the underlying process, which makes 
the resulting problem analytically intractable.   
Introducing short-rate options as an intermediate (model-implied) object allows us 
to obtain an explicit analytic extension of Dupire local volatility to the interest-rate setting and develop 
a relatively simple  swaption-based calibration method described in Section~7. 
Thus, working with  options on the short rate  is not merely 
 a modeling convenience but a key step of our analysis 
 that enables analytical tractability.
To the best of our knowledge, this approach has not been considered previously.

The structure of the paper is as follows. In Section~\ref{Model}, we
 introduce the one-factor Cheyette model and establish the key notations. 
    The main results are presented in Section~\ref{main-res}, and their proofs are 
 provided in Section~\ref{sec:Proofs}.
 Section~\ref{SecNFactorCase} extends the 
 framework to multi-factor Cheyette models, with a detailed example of the two-factor case discussed in 
 Section~\ref{2-factor}. The application of the 
 proposed method 
 for calibrating the Cheyette model to swaptions market
 is described in Section~\ref{swaption:calibration}. Finally, Section~\ref{Gaussian} provides
  computational details for the two-factor Gaussian case.

\section{One-factor Cheytte model}
\label{Model}

We begin by recalling the one-factor Cheyette model (see, e.g., Andersen 
and Piterbarg~\cite{AndersenPiterbarg2}) and introducing key notations used throughout the paper.

Let 
 $x_t=(x_t,\, t\geq 0)$ and $y_t=(y_t,\, t\geq 0)$ be processes  that 
 satisfy equations
\begin{align*}
dx_t &= (y_t - \mu x_t)\,dt + \sigma(t,x_t)\,dW_t, \\
dy_t &= \big(\sigma^2(t,x_t) - 2\mu y_t\big)\,dt, \\
x_0 &= y_0 = 0,
\end{align*}
where $W_t=(W_t,\, t\geq 0)$ is a standard Brownian motion under the risk-neutral 
measure $\Q$, and $\sigma(t,x)$ is a deterministic function of time and space.
In the  one-factor Cheyette model 
the instantaneous forward rate $f_t(T)$ for maturity $T \geq t$,
 and the short rate $r_t$, are the following functions of  $x_t$ and $y_t$
\begin{align}
\label{f(t,T)}
f_t(T) &= f_0(T) + e^{-\mu (T-t)}\big(x_t + G(t,T) y_t\big), \quad 0\leq t\leq T, \\
\label{r}
r_t &= f_0(t) + x_t, \quad t\geq 0,
\end{align}
where the function $G(t,T)$ is defined by
\begin{equation}
\label{G}
G(t,T) = \frac{1 - e^{-\mu (T-t)}}{\mu}, \quad 0 \leq t \leq T.
\end{equation}

Throughout, we assume all processes are adapted to a filtration $({\cal F}_t,\, t\geq 0)$.
Given a probability measure $M$ and a random variable $X$, we denote $\E^M_t(X) = \E^M(X \mid {\cal F}_t)$ and $\E^M(X) = \E^M_0(X)$.

Under these notations, the time-$t$ price of a zero-coupon bond maturing at time $T$ is given by
\begin{equation*}
\begin{split}
P_t(T) &= \E_t^{\Q}\left( e^{-\int_t^T r_u \, du} \right) 
= \frac{P_0(T)}{P_0(t)} \exp\left( -G(t,T)x_t - \frac{1}{2}G^2(t,T)y_t \right), \quad 0 \leq t \leq T.
\end{split}
\end{equation*}
Under the $\T$-forward measure $\QT$, defined by
\[
\frac{d\QT}{d\Q} = \frac{e^{-\int_0^T r_u\, du}}{P_0(T)},
\]
the process $f_t(T)$
follows  the equation
\begin{align}
\label{f-SDE-T}
df_t(T) &= \sigma_T(t, x_t)\,dW_t^{\T},
\end{align}
where
\begin{equation}
\label{sigma_T}
 \sigma_T(t, x) = e^{-\mu(T-t)}\, \sigma(t, x),
\end{equation}
and $W^{\T}_t$ denotes a standard Brownian motion under $\QT$.

Next, introduce the European option on the short rate. 
Specifically, 
consider a European call option with maturity $T$ and strike $K$, written 
on $r_T = f_T(T)$. Its time-$0$ price is given by
\begin{equation}
\label{option_price}
P_0(T)\E^{\T}\left( (r_T - K)_+\right),
\end{equation}
where $\E^{\T}:=\E_0^{\QT}$.
Recall that under 
Bachelier's model with volatility $\sigma$ 
the  {\it non-discounted price} of the option is given by
\begin{equation}
\label{BH}
\text{BH}(f_0(T),K,T,\sigma) = (f_0(T) - K)\,\Phi\left(\frac{f_0(T)-K}{\sigma\sqrt{T}}\right) + \sigma\sqrt{T}\,\phi\left(\frac{f_0(T)-K}{\sigma\sqrt{T}}\right),
\end{equation}
where $\Phi$ and $\phi$ denote the cumulative distribution function and probability 
density function, respectively, of the standard normal distribution $\mathcal{N}(0,1)$.

The Bachelier implied volatility $\sigma_{\text{imp}}(T,K)$ on short rate option is defined as the solution to
\begin{align*}
C(T, K):=
\E^{\T}\left( (r_T - K)_+\right) &= \text{BH}\left(f_0(T),K,T,\sigma_{\text{imp}}(T,K)\right).
\end{align*}
Introduce the shifted forward process
$\widetilde{f}_t(T) = f_t(T) - f_0(T)$, $\widetilde{r}_T =\widetilde{f}_T(T)$,
noting that $\widetilde{f}_0(T) = 0$. 
Let $k = K - f_0(T)$ and rewrite 
 the aforementioned non-discounted option price $C(T, K)$ in these terms as follows
\begin{equation}
\label{C_t(T,k)}
C(T, k)=
\E^{\T}\left((\widetilde{r}_T - k)_+\right)=
\text{BH}\left( 0, k, T, v(T,k) \right),
\end{equation}
where 
$v(T,k) = \sigma_{\text{imp}}(T,k+f_0(T))$.
The total implied variance is
\begin{equation}
\label{w}
w(T,k) := T v^2(T,k).
\end{equation}
\begin{remark}
\label{BH(w)}
{\rm 
It is convenient to view the Bachelier price BH as a function of the total implied variance, 
rather than the implied volatility. Accordingly, with a slight abuse of notation, we will also write
\begin{equation}
\label{C_t(T,k)-1}
C(T, k)=\text{BH}\left(k, w(T,k) \right).
\end{equation}  
}
\end{remark}
Without loss of generality, we assume $f_0(T) = 0$ for the remainder of the paper, 
and  simplify notation by omitting the tilde in $\widetilde{f}_t(T)$.
Under this assumption, and in view of \eqref{f(t,T)} and \eqref{r}, 
we have that 
\begin{equation}
\label{f(t,T)-0}
f_t(T) = e^{-\mu(T-t)}\left(x_t + G(t,T) y_t\right),
\quad
r_T = f_T(T) = x_T.
\end{equation}

Finally, note that  we use throughout standard 
notations $\partial_k = \frac{\partial}{\partial k}$, 
$\partial_{kk} = \frac{\partial^2}{\partial k^2}$, $\partial_T = \frac{\partial}{\partial T}$ etc
for partial derivatives.

\section{Results}
\label{main-res}

\subsection{The main result}
\label{main}

The main result of this paper is the following analytical approximation for the local volatility function 
\begin{equation}
\label{main-1-factor}
\sigma^2(T, k) \approx \frac{ \partial_T w + \mu \left( 2w - k\,\partial_k w \right) + w\,\partial_k w }
{\left(1 - \frac{k\,\partial_k w}{2w} \right)^2 +
 \frac{1}{2}\left( \partial_{kk}w - \frac{(\partial_k w)^2}{2w} \right)}
+ (\partial_k w)^3,
\end{equation}
where $w = w(T,k)$ is the implied total variance defined in~\eqref{w} and $\partial_kw=\partial_kw(T,k)$.
    Numerical experiments demonstrate  (see, e.g.  Figure~\ref{fig:IV} below) that the approximation fits 
  market smiles with minimal calibration error,  offering 
  a practical and efficient tool for modeling volatility smiles in interest rate markets.
 The approximation 
 extends naturally 
to the multi-factor Cheyette model, where a similar form holds with the mean-reversion
 parameter $\mu$ replaced by an effective mean reversion $\mu_{\text{eff}}(T)$
 (see Section~\ref{SecNFactorCase} for more details).

The approximating formula~\eqref{main-1-factor} is
 justified by Theorems~\ref{ImplicitFormula},~\ref{thm:first_order} and~\ref{thm:linear}
 presented  below.

\subsection{Implicit formula for  local volatility}
\label{implicit}

We begin with the following result.
\begin{theorem}
\label{ImplicitFormula}
The local volatility $\sigma^2(T,k)$ satisfies the equation
\begin{equation}
\label{genericLV}
\sigma^2(T,k) = 2\, \frac{
\partial_T C(T,k) + \mu\left( C(T,k) - k\,\partial_k C(T,k) \right) + \E^{\T}\left(x_T (x_T-k)_+ \right) -
 \E^{\T}\left(y_T\,\theta(x_T-k) \right)
}
{ \partial_{kk} C(T,k) },
\end{equation}
where  $C(T, k)$ is the non-discounted option price given by~\eqref{C_t(T,k)} and 
\begin{equation}
\label{theta}
\theta(x) = 
\begin{cases}
1, & x \geq 0, \\
0, & x < 0.
\end{cases}
\end{equation}
\end{theorem}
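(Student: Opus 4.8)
The plan is to run the standard Dupire differentiation argument, but to sidestep the fact that the pricing measure $\QT$ itself depends on the maturity $T$ by working throughout under the fixed risk-neutral measure $\Q$. Since we have normalised $f_0\equiv 0$, the initial bond prices satisfy $P_0(T)=\exp(-\int_0^T f_0(u)\,du)=1$, so the non-discounted price can be written as $C(T,k)=\E^{\Q}(D_T (x_T-k)_+)$ with $D_T=\exp(-\int_0^T r_u\,du)$, and the Radon--Nikodym identity $\frac{d\QT}{d\Q}=D_T$ gives $\E^{\Q}(D_T Z)=\E^{\T}(Z)$ for every $\mathcal F_T$-measurable $Z$. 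Crucially, under $\Q$ the dynamics $dx_t=(y_t-\mu x_t)\,dt+\sigma(t,x_t)\,dW_t$ carry no $T$-dependence, so the only dependence of $C$ on $T$ is through the terminal time, and differentiation in $T$ becomes routine.

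First I would apply the It\^o--Tanaka formula to the non-smooth process $t\mapsto D_t(x_t-k)_+$. Writing $d(x_t-k)_+=\theta(x_t-k)\,dx_t+\tfrac12\,dL_t^k$ for the local time $L^k$ and combining with $dD_t=-r_t D_t\,dt=-x_t D_t\,dt$ (recall $r_t=x_t$ since $f_0\equiv 0$), the product rule yields the drift
\[
-x_t D_t (x_t-k)_+ + D_t\,\theta(x_t-k)\,(y_t-\mu x_t) + \tfrac12 D_t\,\sigma^2(t,x_t)\,\delta(x_t-k)
\]
plus a $\Q$-martingale part. Taking $\Q$-expectations and differentiating in $T$ (the constant initial value $(-k)_+$ drops out, and the martingale part contributes nothing) gives $\partial_T C$ as the $\Q$-expectation of the drift evaluated at $t=T$; absorbing the factor $D_T$ into the change to the $T$-forward measure produces
\[
\partial_T C = -\,\E^{\T}\big(x_T(x_T-k)_+\big) + \E^{\T}\big(\theta(x_T-k)(y_T-\mu x_T)\big) + \tfrac12\,\sigma^2(T,k)\,\partial_{kk}C,
\]
where the local-time term has been identified with $\tfrac12\sigma^2(T,k)\,p_T(k)$ and $p_T(k)=\partial_{kk}C$ is the $\QT$-density of $x_T$ at $k$.

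It then remains to recast the first-order drift terms into the stated form. Splitting $\E^{\T}(\theta(x_T-k)(y_T-\mu x_T))=-\E^{\T}(y_T\,\theta(x_T-k))+\mu\,\E^{\T}(x_T\,\theta(x_T-k))$ and using the two elementary identities $\E^{\T}(\theta(x_T-k))=-\partial_k C$ and $\E^{\T}(x_T\,\theta(x_T-k))=C-k\,\partial_k C$, both obtained by differentiating $C=\E^{\T}((x_T-k)_+)$ under the integral sign, the $\mu x_T$ contribution becomes exactly $\mu\,(C-k\,\partial_k C)$. Substituting and solving the resulting relation for $\sigma^2(T,k)$ gives \eqref{genericLV}.

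The delicate point throughout is the $T$-differentiation: done naively under $\QT$ it would require tracking simultaneously the change of the maturity, of the volatility $\sigma_T(t,x)=e^{-\mu(T-t)}\sigma(t,x)$, and of the measure itself. Pushing everything back to the $T$-independent measure $\Q$ is what makes the computation clean; the remaining care is the justification of the It\^o--Tanaka/local-time step and of the interchange of $\partial_T$ with the expectation, standard in the Dupire setting but requiring mild regularity of the transition density.
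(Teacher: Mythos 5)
Your argument is correct and does establish \eqref{genericLV}, but it takes a genuinely different route from the paper. The paper introduces the two-parameter family $C^T(t,k)=\E^{\T}\big((f_t(T)-k)_+\big)$ of options on the $T$-forward rate, applies the classical Dupire formula in the maturity variable $t$ at $t=T$ (legitimate because $f_t(T)$ is a driftless diffusion under $\QT$), and then converts the unobservable $\partial_t C^T(t,k)\big|_{t=T}$ into the observable $\partial_T C(T,k)$ via the decomposition $\partial_t C^T\big|_{t=T}=d_T C^T(T,k)-\partial_T C^T(t,k)\big|_{t=T}$; the extra terms come from $\partial_T P_t(T)\big|_{t=T}=-x_T$ (the shift between neighbouring forward measures) and $\partial_T f_t(T)\big|_{t=T}=y_T-\mu x_T$. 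You instead stay under the $T$-independent measure $\Q$, apply It\^o--Tanaka to $D_t(x_t-k)_+$, and read off $\partial_T C$ as the expected drift at $t=T$; the same three contributions reappear as the discounting drift $-x_tD_t$, the $\Q$-drift $y_t-\mu x_t$ of $x_t$, and the local-time term identified with $\tfrac12\sigma^2(T,k)\,\partial_{kk}C$. Your route is more elementary and in one step, and it avoids the paper's slightly formal manipulation $\E^{\T+dT}(\cdot)\approx\E^{\T}\big(P_t(T+dT)\,\cdot\big)$; what the paper's route buys is the explicit link to the classical Dupire framework for the rolling family $C^T(t,k)$ and to the Markovian projection of $f_t(T)$, which is exactly the structure reused in the multi-factor extension (Theorem~\ref{ImplicitFormula2F}). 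One small slip to fix: your stated split $\E^{\T}\big(\theta(x_T-k)(y_T-\mu x_T)\big)=-\E^{\T}\big(y_T\theta(x_T-k)\big)+\mu\,\E^{\T}\big(x_T\theta(x_T-k)\big)$ has both signs reversed; with the correct split $+\E^{\T}\big(y_T\theta(x_T-k)\big)-\mu\,\E^{\T}\big(x_T\theta(x_T-k)\big)$, your preceding display for $\partial_T C$ rearranges to \eqref{genericLV} exactly as stated.
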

Theorem~\ref{ImplicitFormula} can be derived from limiting cases 
of swap rate dynamics in Cao and Henry-Labordère~\cite{CL}, Gatarek et al.~\cite{GJQ} and Lucic~\cite{L}.  
An alternative proof is provided in Section~\ref{proofImplicit}.

\begin{remark}
{\rm
It should be noted that the formula in 
Theorem~\ref{ImplicitFormula} is implicit in the following sense.
The term $\E^{\T}\left(y_T\theta(x_T-k) \right)$ 
depends on the model dynamics and cannot be extracted directly from option prices.
Typically, Monte Carlo simulations are used to estimate this expectation (see, e.g.,
Cao and Henry-Labordère~\cite{CL} and Lucic~\cite{L}), while 
the term $ \E^{\T}\left(x_T (x_T-k)_+ \right)$ can be evaluated directly from the implied distribution. 
Our approach consists of analytically approximating the entire combination
\begin{equation}
\label{A}
\A=\E^{\T}\left(x_T (x_T-k)_+ \right) - \E^{\T}\left(y_T\,\theta(x_T-k) \right).
\end{equation}
Note that  the quantity in the preceding display
is zero in the case when the volatility depends only on time, i.e. $\sigma(t,x) = \sigma(t)$. 
Developing the aforementioned analytical approximation is a key step 
 in obtaining  the  local volatility approximation~\eqref{main-1-factor}.
}
\end{remark}

\begin{remark}
{\rm 
As we noted above, the equation~\eqref{genericLV}  for local volatility
can be derived from some known results. 
For example,  it can be derived from ~\cite[equation (3)]{CL}.
We would 
like to note  the following concerning 
their analysis of the local volatility. 
 Despite establishing their  equation (3)
(as part of~\cite[Corollary 2.4]{CL}), they use the  equation 
\begin{align}
\label{CL-equation3}
\sigma_{loc}^2(T,k)&=2\frac{\partial _T C(T,k)+k C(T,k)+
2\int _k^{\infty }{C(T,x) dx}}{\partial_{kk}C(T,k)},
\end{align}
for the local volatility. 
  However, the above equation provides only a partial representation of the local 
 volatility (compare with~\eqref{genericLV}). 
 In particular, it includes the term 
 $$
k C(T,k)+2  \int _k^{\infty }{C(T,x) dx}= \E^\T(x_T\left(x_T-k\right)_+),
$$
that can be computed from the implied distribution,
but it omits  the term $\E^{\T}\left(y_T\,\theta(x_T-k) \right)$,
which must also  be included in the complete expression for local volatility. The missing component 
is addressed in detail below.
}
\end{remark}

\subsection{First-Order approximation}
\label{1st-order}

In this section, 
we derive a  first-order (to be explained) approximation for the implicit local volatility
 formula.

\begin{theorem}[First-Order Approximation]
\label{thm:first_order}

Consider an underlying asset whose price 
  follows a diffusion process 
 $(x_t,\, t\geq0)$ governed by  the equation
\begin{equation}
\label{eps-x}
dx_t= \sigma(t,x_t)\,dW_t, \quad x_0=0,
\end{equation}
where 
$\sigma^2(t,x)=\sigma_0^2(t) + \varepsilon\Delta \sigma^2(t,x)$.
Consider a European call option on the asset and suppose 
that given a  strike $k$  the total implied Bachelier variance satisfies the equation
\begin{equation}
\label{wT2}
w(T,k)=\int_0^T\sigma_0^2(t)dt
\end{equation}
for all maturities $T>0$.
Then under standard regularity assumptions (see Remark~\ref{assumptions} below) 
\begin{equation}
\begin{split}
A&:=\E\left(x_T(x_T-k)_+\right) - 
\E\left(\theta(x_T-k)\int_0^T\sigma^2(t,x_t)\,dt\right)\\
&= 
\frac{1}{2} p(T,k)w(T,k) \partial_kw(T,k)+o(\varepsilon),\quad\text{as}\quad \eps\to 0,
\end{split}
\end{equation}
where
\[
p(T,k) = \frac{1}{\sqrt{2\pi w(T,k)}}\exp\left( -\frac{k^2}{2w(T,k)} \right).
\]
\end{theorem}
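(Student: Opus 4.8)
The plan is to prove this as a first-order expansion in $\eps$, after an exact Itô reduction of $A$. Write $x^0$ for the baseline ($\eps=0$) process, which is Gaussian with $x^0_T\sim\mathcal N(0,w_0(T))$, $w_0(T)=\int_0^T\sigma_0^2(t)\,dt$; let $\E^0$ denote expectation under the baseline law, $\rho^0_t$ the density of $x^0_t$, and expand $w=w_0+\eps w_1+o(\eps)$, $C=C_0+\eps C_1+o(\eps)$. The hypothesis $w(T,k)=w_0(T)$ at the given strike forces $w_1(T,k)=0$ there and makes $p(T,k)=\rho^0_T(k)=:p_0$ and $w(T,k)=w_0(T)$ exact, while $\partial_k w(T,k)=\eps\,\partial_k w_1(T,k)+o(\eps)$. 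Thus the claim reduces to the first-order identity $A_1:=\partial_\eps A|_{\eps=0}=\tfrac12 p_0\,w_0(T)\,\partial_k w_1(T,k)$.

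First I would record two facts. From $C=\text{BH}(k,w(T,k))$, the Bachelier heat relation $\partial_w\text{BH}=\tfrac12\partial_{kk}\text{BH}$ and $\partial_{kk}\text{BH}(k,w_0)=p_0$ give $C_1=\tfrac12 p_0\,w_1$; differentiating in $k$ and using $w_1(T,k)=0$ at the strike yields $\tfrac12 p_0 w_0\,\partial_k w_1=w_0\,\partial_k C_1(T,k)$. Next, applying Itô to $\phi(x_t)=x_t(x_t-k)_+$, whose distributional second derivative is $\phi''(x)=2\theta(x-k)+k\delta(x-k)$, taking expectations (the martingale part vanishes), and using the Dupire identity $\E(\sigma^2(t,x_t)\delta(x_t-k))=2\partial_t C(t,k)$ to evaluate the $\delta$-term, I obtain the exact reduction
\[
A=k\,C(T,k)+\E\Big(\int_0^T\big(\theta(x_t-k)-\theta(x_T-k)\big)\sigma^2(t,x_t)\,dt\Big)=:k\,C(T,k)+\mathcal O,
\]
up to an $\eps$-independent constant (absent for $k>0$) that does not affect $A_1$. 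Because $C_1(T,k)=\tfrac12 p_0 w_1(T,k)=0$ at the strike, the term $k\,C$ drops out at first order, leaving $A_1=\partial_\eps\mathcal O|_{\eps=0}=:\mathcal O_1$.

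The heart of the argument is the expansion of the occupation term $\mathcal O$. Differentiating in $\eps$ splits $\mathcal O_1$ into a law-change part (with $\sigma^2$ frozen at $\sigma_0^2$ and only the indicators feeling the perturbed law) and an explicit part carrying $\Delta\sigma^2$ along the baseline path. With $Q_s:=\partial_\eps\E(\theta(x_s-k))|_0=-\partial_k C_1(s,k)$, the law-change part is $\int_0^T\sigma_0^2(t)(Q_t-Q_T)\,dt=\int_0^T\sigma_0^2(t)Q_t\,dt-w_0(T)Q_T$, whose second summand is exactly the target $-w_0(T)Q_T=w_0(T)\,\partial_k C_1(T,k)$. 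The explicit part is $\mathcal O_1^{(a)}:=\int_0^T\E^0\big((\theta(x^0_t-k)-\theta(x^0_T-k))\Delta\sigma^2(t,x^0_t)\big)\,dt$, so it remains to prove the cancellation
\[
\mathcal O_1^{(a)}+\int_0^T\sigma_0^2(t)\,Q_t\,dt=0.
\]

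I expect this cancellation to be the main obstacle. I would prove it by differentiating $R(T):=\mathcal O_1^{(a)}+\int_0^T\sigma_0^2(t)Q_t\,dt$ in the maturity and checking $R(0)=0$. Since $x^0$ has independent increments, $(x^0_t,x^0_T)$ is bivariate Gaussian with $\Cov(x^0_t,x^0_T)=w_0(t)$; conditioning on $x^0_t$ then supplies two facts: the inner terminal expectation $E(t,k;T):=\E^0(\theta(x^0_T-k)\Delta\sigma^2(t,x^0_t))$ obeys the backward heat relation $\partial_T E=\tfrac12\sigma_0^2(T)\partial_{kk}E$, and the first-order vega (Duhamel) representation $C_1(T,k)=\tfrac12\int_0^T\E^0(\Delta\sigma^2(t,x^0_t)\,g_t(x^0_t-k))\,dt$, with $g_t$ the centred Gaussian density of variance $w_0(T)-w_0(t)$, yields $\partial_k\mathcal E=-2C_1$, where $\mathcal E(T,k):=\E^0(\theta(x^0_T-k)\int_0^T\Delta\sigma^2(t,x^0_t)\,dt)$. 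Differentiating $R$ in $T$, the two terminal boundary contributions cancel and these facts reduce $dR/dT$ to $-\tfrac12\sigma_0^2(T)\partial_{kk}\mathcal E+\sigma_0^2(T)Q_T$, which vanishes because $\partial_{kk}\mathcal E=-2\partial_k C_1=2Q_T$; with $R(0)=0$ the cancellation follows. Assembling the pieces gives $A_1=\mathcal O_1=w_0(T)\partial_k C_1(T,k)=\tfrac12 p_0 w_0(T)\partial_k w_1(T,k)$, as claimed. The distributional ($\delta$, $\delta'$) and Itô--Tanaka manipulations, the differentiation under the expectation, and the Duhamel representation are precisely what the standard regularity assumptions of Remark~\ref{assumptions} are intended to license.
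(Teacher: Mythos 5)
Your argument is correct, and it reaches the paper's conclusion by a genuinely different route. The paper works at the level of semigroups: writing $A=P^{\eps}_T\xi-(P^{\eps}\ast\sigma^2P^{\eps})_T\theta_k$ with $\xi(x)=x(x-k)_+$, it uses Duhamel's formula for $P^{\eps}_T$, the distributional identity $\partial_{xx}\xi=2\theta_k+x\delta_k$, and — crucially and early — the price-preservation hypothesis \eqref{wT2} to annihilate the $\delta_k$ contribution via $(P^{\eps}\ast\Delta\sigma^2P^0)_T\delta_k=0$; after the rearrangement $(P^{\eps}\ast\sigma_0^2P^{\eps})_T=w_{0,T}P^{\eps}_T$ this collapses $A$ to $w_{0,T}\bigl(P^0_T-P^{\eps}_T\bigr)\theta_k$ plus an explicitly $O(\eps^2)$ operator remainder, and the surviving term is evaluated by the Bachelier vega exactly as in your final step. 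You instead perform an exact It\^o--Tanaka reduction of $\E\left(x_T(x_T-k)_+\right)$ to an occupation-time difference plus $k\,C(T,k)$, split the first-order derivative into explicit and law-change parts, and prove the key cancellation $\mathcal O_1^{(a)}+\int_0^T\sigma_0^2(t)Q_t\,dt=0$ by an ODE-in-maturity argument; notably your cancellation lemma is unconditional (it does not use \eqref{wT2}), and the hypothesis enters only at the end, through $w_1(T,k)=0$. Both proofs ultimately rest on the same two structural facts — the first-order Duhamel/vega representation of the price and the Bachelier relation $\partial_w\mathrm{BH}=\tfrac12\partial_{kk}\mathrm{BH}$ — but the paper's operator formulation is shorter and makes the size of the remainder transparent, while yours is more elementary and probabilistic and isolates precisely where the hypothesis is used. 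One small point to make explicit: your reduction of the claim to the identity for $A_1=\partial_\eps A|_{\eps=0}$ tacitly uses $A|_{\eps=0}=0$, i.e. the Gaussian identity $\E\left(X(X-k)_+\right)=v\,\mathsf P(X>k)$ for $X\sim\mathcal N(0,v)$ (this is the paper's step $P^0_T\xi=w_{0,T}P^0_T\theta_k$); it is a one-line computation, but it must be stated since the theorem's right-hand side is $O(\eps)$.
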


\begin{remark}[Assumptions]
\label{assumptions}
{\rm 
By the regularity assumptions in Theorem~\ref{thm:first_order}, 
we refer to conditions sufficient to guarantee the existence of a 
unique strong solution, smooth transition densities, and 
related properties for a diffusion equation (e.g., see Ikeda and Watanabe~\cite{ike},
Karatzas and Shreve~\cite{KaratzasShreve} and
 references therein). Such assumptions are standard in the context 
 of local and stochastic volatility models in finance (see, e.g., Henry-Labordère~\cite{HenryLabordere} and
  Lorig et.al.~\cite{LorigPascucci}).
}
\end{remark}

Theorem~\ref{thm:first_order} is proved in Section~\ref{proof:first_order}. Now we apply the theorem 
to approximate the  quantity~\eqref{A}.  
To this end, note first that 
  by Gyöngy's lemma, the 
Markovian projection  $\hat{f}_t(T)$ of $f_t(T)$ 
   satisfies the equation
   \begin{equation*}
  d \hat{f}_t(T) =  \hat{\sigma}_{T}(t,\hat{f}_t(T))\,d\hat{W}_t,
\end{equation*}
where $\hat{W}_t$ is a one-dimensional standard Brownian motion and 
$$
\hat{\sigma}^2_{T}(t,k) =\E^{\T}\left(   \sigma^2_{T}(t,x_t)  \mid f_t(T) = k\right)
$$
and  $\sigma_T$ is defined in~\eqref{sigma_T}.
By properties of the  Markovian projection the marginal distributions of 
$f_t(T)$ and  $\hat{f}_t(T)$ are identical for any $t\in[0,T]$. 
Therefore, 
\begin{equation}
\label{a1}
\mathbb{E}^{\T}\left(x_T\left(x_T-k\right)_+\right) = 
\mathbb{E}^\T\left(f_T(T)(f_T(T)-k)_+\right)=
\mathbb{E}^\T\left(\hat{f}_T(T)(\hat{f}_T(T)-k)_+\right).
\end{equation}
Further, express $y_T$  explicitly in the integral form, i.e. 
$
  y_T
  =  \int_{0}^{T}  \sigma^2_{T}(t,x_t)\,dt,
$
and approximate
\begin{equation}
\label{a2}
 \mathbb{E}^{\T}\left(\theta(x_T-k)y_T\right) = 
 \int_0^T \mathbb{E}^{\T}\left(\theta(x_T-k)\sigma^2_T(t,x_t)\right)dt
 \approx  \int_0^T   \mathbb{E}^\T\left(\theta(\hat{f}_T(T)-k)\, 
  \hat{\sigma}^2_{T}(t,\hat{f}_t(T))\right)dt.
\end{equation}
Combining~\eqref{a1} with~\eqref{a2} and applying 
Theorem~\ref{thm:first_order} to the process $\hat{f}_t(T)$ gives the following approximation
\begin{equation}
\label{eq:firstOrder}
\A=\mathbb{E}^\T\left(x_T\left(x_T-k\right)_+\right)
  - \mathbb{E}^\T\left(y_T\theta(x_T-k)\right) 
\approx \frac{1}{2}p(T,k)w(T,k)\partial_kw(T,k).
\end{equation}
Denote for short $w=w(T,k)$ and $\partial_kw=\partial_kw(T,k)$.
Using~\eqref{eq:firstOrder},
the standard (in the Bachelier setting) equations
\begin{align}
\frac{\partial_{kk}C(T,k)}{p(T,k)}&=
\left(1 - \frac{k\,\partial_k w}{2w} \right)^2 +
 \frac{1}{2}\left( \partial_{kk}w - \frac{(\partial_k w)^2}{2w} \right),
 \nonumber\\
 \label{eq:CminusTerm}
C(T,k)-k\partial_kC(T,k)
&=p(T,k)\left( w - \frac{1}{2}k\partial_kw\right),\\
\nonumber
\partial_TC(T,k)&=p(T,k)\partial_Tw,
\end{align}
and the implicit formula~\eqref{genericLV}
 gives  the following approximation for the local volatility
\begin{equation}
\label{eq:first_order_LV}
\sigma^2(T,k) \approx 
\frac{ \partial_T w + \mu\left( 2w - k\partial_k w \right) + w\partial_k w }
{\left(1 - \frac{k\,\partial_k w}{2w} \right)^2 +
 \frac{1}{2}\left( \partial_{kk}w - \frac{(\partial_k w)^2}{2w} \right)}.
\end{equation}
We refer to~\eqref{eq:first_order_LV} as the \emph{first-order approximation} of the local volatility, 
as it is 
based on the  first-order terms of the perturbation expansion.

\subsection{Third-Order adjustment}
\label{3rd-order}
 
 Numerical tests indicate that the approximation~\eqref{eq:first_order_LV}
  tends to underestimate option values for long maturities.
A natural remedy is to consider a higher-order perturbation. However, this direct approach leads 
to rather cumbersome computations.
We instead propose an effective refinement of the approximation~\eqref{eq:first_order_LV} 
by incorporating a higher-order correction term.
This term arises naturally under the assumption that the implied variance is 
linear in the strike for a fixed maturity.
We then extend this idea to construct a similar approximation in the general case (see below).

Specifically,  fix maturity $T$
and  approximate the corresponding implied Bachelier 
variance by a linear function of the strike
as follows
 \begin{equation}
 \label{eq:affineW}
w(T,k)=a+bk
\end{equation}
for some $a>0$ and $b\neq 0$.
  This simple linear approximation of the implied variance turns out to be consistent with the
   approximation~\eqref{eq:firstOrder} for the local volatility.
In particular, note that this approximation is equivalent to replacing the 
conditional total variance by its implied counterpart, that is,
  \begin{equation}
\label{eq:condVarianceApprox}
\E\left(\int_{0}^{T}\sigma^{2}(t,x_t)\,\dd t\;\Bigm|\;x_T=x\right)
\;\approx\;w(T,x),
\end{equation}
so that 
\begin{equation}
\label{eq:Aheuristic}
\A\approx
\E\left(x_T(x_T-k)_+\right)-
\E\left(w(T,x_T)\,\theta(x_T-k)\right).
\end{equation}
Then, assuming~\eqref{eq:affineW} and 
expanding the {\it right-hand side} of~\eqref{eq:Aheuristic} in powers of~$b$ reproduces
\eqref{eq:firstOrder} (we omit details).

Furthermore, 
note that the use of~\eqref{eq:condVarianceApprox} (or, equivalently,~\eqref{eq:firstOrder}) has the 
following drawback.
Namely, since $\E(x_t)=0$, we have that 
\[
\E\left(x_T(x_T-k)_+\right)\to \E(x_T^{2})=
\E\left(\int_{0}^{T}\sigma^{2}(t,x_t)\,\dd t\right),\quad\text{as}\quad k\to -\infty.
\]
However, $\E(x_T^{2})-\E(w(T,x_T))\neq0$.
This discrepancy can be removed by 
using the adjusted approximation
\begin{equation}
\label{A-adj}
\A\approx
\E\left(x_T(x_T-k)_+\right)
-\E\left(\epsilon+w(T,x_T))\,\theta(x_T-k)\right),
\end{equation}
where $\epsilon:=\E(x_T^{2})-\E(w(T,x_T))$.
  It turns out that in the linear case, the correction term $\epsilon$ can be 
 computed analytically, as stated in Theorem~\ref{thm:linear} below.
  Moreover, the adjusted approximation in the linear case suggests an 
  effective improvement of the approximation~\eqref{eq:firstOrder} in the general setting.

 \begin{theorem}[Third-Order Correction]
\label{thm:linear}
Suppose that~\eqref{eq:affineW} holds.  Then  
$
\epsilon=\frac{1}{2}b^2
$
and
\begin{equation}
\label{A3}
\frac{\A}{p(T,k)} = \frac{1}{2}(a+bk)b+\frac{1}{2}b^3k+ o(b^3), \quad\text{as}\quad b\to 0.
\end{equation}
\end{theorem}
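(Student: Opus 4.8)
The plan is to compute both assertions directly under the linear ansatz $w(T,k)=a+bk$, exploiting the fact that everything reduces to Gaussian integrals. Recall that under the Bachelier setup with $f_0(T)=0$, the marginal density of the shifted rate $x_T$ is recovered from the option price via $\partial_{kk}C(T,k)=q(T,k)$, where $q$ is the true density of $x_T$. The key observation is that the implied-variance representation $C(T,k)=\mathrm{BH}(k,w(T,k))$ lets me express this density in terms of $w$ and its $k$-derivatives. Under the affine ansatz the Bachelier price and its derivatives become explicit closed forms in $a$, $b$, and $k$, so I would first differentiate $\mathrm{BH}(k,a+bk)$ twice in $k$ to obtain $q(T,k)$ as a Gaussian-type factor $p(T,k)$ times a polynomial correction in $b$.

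First I would establish $\epsilon=\tfrac12 b^2$. By definition $\epsilon=\E(x_T^2)-\E(w(T,x_T))=\E(x_T^2)-\E(a+bx_T)=\E(x_T^2)-a$, using $\E(x_T)=0$ (since $\widetilde f_0(T)=0$ and $f_t(T)$ is a martingale under $\QT$). So the claim $\epsilon=\tfrac12 b^2$ is equivalent to $\E(x_T^2)=a+\tfrac12 b^2$. I would obtain $\E(x_T^2)=\int k^2\,q(T,k)\,dk$ by integrating the explicit density against $k^2$; alternatively, and more cleanly, I would use the at-the-money-type identity $\E(x_T^2)=-2\int_{-\infty}^{\infty}k\,\partial_kC(T,k)\,dk$ or the fact that $C(T,k)\to \tfrac12\E(x_T^2)$-related limits as $k\to-\infty$. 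Carrying out the integral of $k^2$ against the affine-$w$ Bachelier density yields the variance inflation by exactly $\tfrac12 b^2$ over the leading value $a$; this is a finite Gaussian moment computation.

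For the expansion~\eqref{A3}, I would start from the adjusted right-hand side of~\eqref{A-adj}, namely $\A\approx \E\big(x_T(x_T-k)_+\big)-\E\big((\epsilon+w(T,x_T))\theta(x_T-k)\big)$, substitute $w(T,x_T)=a+bx_T$ and $\epsilon=\tfrac12 b^2$, and reduce everything to the three tail integrals $\E\big((x_T-k)_+\big)$, $\E\big(x_T(x_T-k)_+\big)$, and $\E\big(x_T\,\theta(x_T-k)\big)=\E\big((x_T-k)_+\big)+k\,\E\big(\theta(x_T-k)\big)$ against the affine-$w$ density. Each of these is an explicit function of $a$, $b$, $k$ through $\Phi$ and $\phi$ evaluated at the appropriately scaled argument. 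I would then collect terms, divide by $p(T,k)$, and perform a Taylor expansion in $b$ about $b=0$ (the small-skew regime), keeping terms through order $b^3$. The leading $\tfrac12(a+bk)b$ should match the first-order approximation~\eqref{eq:firstOrder} evaluated at $w=a+bk$, $\partial_kw=b$, while the $\tfrac12 b^3 k$ term is precisely the third-order adjustment that the $\epsilon$-correction supplies.

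The main obstacle I anticipate is bookkeeping the density correction: because $w$ depends on $k$, the density $q(T,k)=\partial_{kk}C(T,k)$ is \emph{not} simply $p(T,k)$ but carries $O(b)$ and $O(b^2)$ multiplicative corrections coming from the chain-rule terms $\partial_kw=b$ and the mismatch between the evaluation point $k/\sqrt{w(T,k)}$ and a fixed-variance Gaussian. Keeping these corrections consistently to the right order — and ensuring that the $b^3k$ coefficient is reproduced exactly rather than absorbing stray $o(b^3)$ remainders — is the delicate part. I would manage this by writing $w(T,x_T)=a+bx_T$ inside the expectations exactly (no expansion there), and only expanding in $b$ at the very end after all Gaussian integrals are evaluated in closed form, so that the error terms are transparently $o(b^3)$.
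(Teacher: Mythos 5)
Your overall strategy — work out everything in closed form under the affine ansatz $w(T,k)=a+bk$ and only then Taylor-expand in $b$ — is the same strategy the paper follows, and your treatment of $\epsilon$ via $\E(x_T^2)-a$ is exactly the paper's. The one substantive ingredient you are missing is the device that makes your ``explicit closed forms in $\Phi$ and $\phi$'' actually available. Because $w$ depends on the integration variable, the density you must integrate against is proportional to $(a+bx)^{-1/2}\exp\bigl(-x^2/(2(a+bx))\bigr)$, so your tail integrals $\int_k^\infty x^j(\cdots)\,dx$ are \emph{not} Gaussian integrals and do not reduce to $\Phi$ and $\phi$ ``at an appropriately scaled argument'' by inspection; calling $\E(x_T^2)$ ``a finite Gaussian moment computation'' is where your plan would stall. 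The paper's route is: integrate by parts to get $\E\bigl(x_T(x_T-k)_+\bigr)=\int_k^\infty\bigl(C(T,x)-x\,\partial_xC(T,x)\bigr)\,dx$, use \eqref{eq:CminusTerm} to write the integrand as $p(T,x)\bigl(a+\tfrac{b}{2}x\bigr)$, and then substitute $u=a+bx$, after which the integrand is recognized as a polynomial in $u$ times the density of an inverse Gaussian law $\mathrm{IG}(a,a^2/b^2)$ (Lemma~\ref{IG}); the required quantities are then the known truncated moments of that law, and both $\E(\tau)=a$, $\E(\tau^2)=a^2+ab^2$ (giving $\epsilon=\tfrac12 b^2$ cleanly) and the final expansion in $b$ come for free. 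You would need this substitution, or an equivalent one, to complete your step ``evaluate all Gaussian integrals in closed form.'' Finally, note that the cubic coefficient — the entire content of \eqref{A3} beyond the first-order result — is precisely the part you assert rather than derive; when you do carry out the expansion you should also confront the fact that the paper's own proof and \eqref{A4} produce $\tfrac12 b^3$ where the statement \eqref{A3} displays $\tfrac12 b^3k$, a discrepancy your computation would have to resolve.
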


Theorem~\ref{thm:linear} is proved in Section~\ref{proof:linear}.
Now we apply this result for deriving the main result of the paper.
To this end, observe that 
in the linear case 
  $w=w(T,k)=a+bk$ and  $\partial_kw=\partial_kw(T,k)=b$.
  In these terms we have that 
   \begin{equation}
 \label{y_T-adj}
  \epsilon=\frac{1}{2}(\partial_kw(T,x_T))^2,\quad
   y_T\approx w(T, x_T)+\frac{1}{2}(\partial_kw(T,x_T))^2,
   \end{equation}
   and equation~\eqref{A3} becomes as follows
   \begin{equation}
  \label{A4}
\frac{\A}{p(T,k)}=
\frac{1}{2}w\partial_kw
+\frac{1}{2}(\partial_kw)^3
+o\left((\partial_kw)^3\right).
\end{equation}
The idea underlying the final formula~\eqref{main-1-factor} is to 
apply the equation~\eqref{A4}, rather than~\eqref{eq:firstOrder}, in the general case as well.
In other words, we propose to use the approximation 
\begin{equation}
\label{f1}
\sigma^2(T,k) \approx \frac{ \partial_T w + 
 \mu\left( 2w - k\partial_k w \right) +w\partial_k w + (\partial_k w)^3 }
{ \left( 1 - \frac{k\partial_k w}{2w} \right)^2 + 
\frac{1}{2}\left( \partial_{kk}w - \frac{(\partial_k w)^2}{2w} \right) }
\end{equation}
in the general setting.
It is  left to note that  
\begin{align*}
\frac{(\partial_kw)^3 }{\left( 1 - \frac{k\partial_kw}{2 w} \right)^2 + 
\frac{1}{2} \left(\partial_{kk}w - \frac{(\partial_kw)^2}{2 w} \right) } = 
(\partial_kw)^3 + o\left((\partial_kw)^3 \right) + 
O\left((\partial_kw)^3\partial_{kk}w\right),
\end{align*}
which allows to rewrite 
equation~\eqref{f1} in the final form~\eqref{main-1-factor}
(that differs from~\eqref{f1} by 
higher-order terms in \(\partial_kw\)).

Numerical experiments 
demonstrate that the adjusted approximation~\eqref{main-1-factor} is
 more effective than the first order approximation~\eqref{eq:first_order_LV}. 
For example, Figure~\ref{fig:IV} presents
 the implied volatility curve for a 10-year option, together with Monte Carlo estimates using
  (i) the first-order local volatility approximation and (ii) the third-order adjusted approximation.
\begin{figure}
\centering
\includegraphics[scale=0.8]{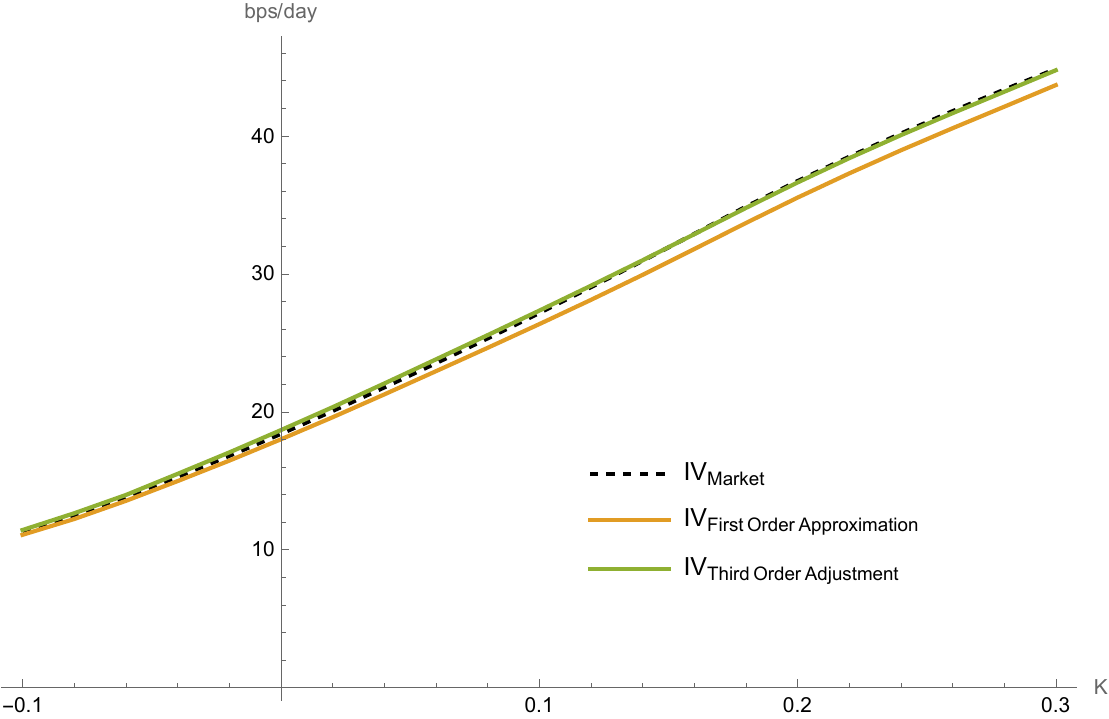}
\caption{Ten-year implied volatility curve and adjusted approximations.}
\label{fig:IV}
\end{figure}

\section{Proofs}
\label{sec:Proofs}

\subsection{Proof of Theorem~\ref{ImplicitFormula}}
\label{proofImplicit}

Fix $T > 0$ and consider a European call option with strike $k$ and 
maturity $t \leq T$ on the forward rate $f_t(T)$.  
Let 
$
C^T(t,k) := \mathbb{E}^{\T}\left(\left( f_t(T) - k \right)_+\right)
$
be the non-discounted price of the option at time $t$ under the $T$-forward measure.
At $t = T$, we have $f_T(T)=x_T$, so that 
$C^T(T,k) = \mathbb{E}^{\T}\left( x_T - k \right)_+ = C(T,k)$,
where $C(T,k)$ denotes the non-discounted price of the option on the rolling forward rate.
According to the Dupire's framework, 
we have that 
\begin{equation}
\label{Dupire1}
 \sigma^2(T,k)  =  2\,\frac{ \partial_t C^T(t,k)\big|_{t=T} }{ \partial_{kk} C^{T}(T,k) }
 =2\,\frac{ \partial_t C^T(t,k)\big|_{t=T} }{ \partial_{kk} C(T,k) }.
\end{equation}
The denominator 
$\partial_{kk} C^{T}(T,k)=\partial_{kk} C(T,k)$ can be extracted from the market, but the numerator 
$\partial_t C^T(t,k)\big|_{t=T}$ is not directly observable, since options with maturities $t<T$ are not traded.
We therefore relate $\partial_t C^T(t,k)\big|_{t=T}$ to market data.

First, note  that 
\begin{equation}
\label{total_diff}
\partial_t C^T(t,k)\big|_{t=T} = d_T C^T(T,k) - \partial_T C^T(t,k)\big|_{t=T},
\end{equation}
where $d_T$ denotes the full derivative with respect to $T$,  
and 
\[
d_T C^T(T,k) = \partial_T C(T,k).
\]
Therefore, it suffices to compute the derivative $\partial_T C^T(t,k)\big|_{t=T}$.
To this end, recall that
\[
C^{T+dT}(t,k) = \mathbb{E}^{\T+dT}\left( f_t(T+dT) - k \right)_+ 
= \mathbb{E}^{\T}\left( P_t(T+dT)\, (f_t(T+dT) - k)_+ \right)
\]
and compute 
\begin{equation*}
\begin{split}
-\partial_T C^T(t,k)\big|_{t=T}
&= -\mathbb{E}^{\T}\left( \partial_T \left( P_t(T)(f_t(T) - k)_+ \right) \Big|_{t=T} \right) \\
&= - \mathbb{E}^{\T}\left( (f_T(T)-k)_+\,\partial_T P_t(T)\big|_{t=T} \right) - 
\mathbb{E}^{\T}\left( P_t(T)\,\partial_T (f_t(T)-k)_+\big|_{t=T} \right) .
\end{split}
\end{equation*}
Since $P_T(T) = 1$, the second term simplifies, and we have that
\begin{align}
\label{eq:generic}
-\partial_T C^T(t,k)\big|_{t=T}
&= -\mathbb{E}^{\T}\left( (f_T(T)-k)_+\,\partial_T P_t(T)\big|_{t=T} \right) - 
\mathbb{E}^{\T}\left( \theta(f_T(T)-k)\,\partial_T f_t(T)\big|_{t=T} \right).
\end{align}
By~\eqref{f(t,T)-0}), 
$
\partial_T P_t(T)\big|_{t=T} = -f_T(T) = -x_T$ and $\partial_T f_t(T)\big|_{t=T} = y_T - \mu x_T$.
Therefore, 
\[
-\partial_T C^T(t,k)\big|_{t=T}
= \mathbb{E}^{\T}\left( x_T(x_T-k)_+ \right) -
 \mathbb{E}^{\T}\left( y_T\,\theta(x_T-k) \right) + \mu\,\mathbb{E}^{\T}\left( x_T\,\theta(x_T-k) \right).
\]
Using 
$
x_T\,\theta(x_T-k) = (x_T-k)_+ - k\,\partial_k (x_T-k)_+
$
we obtain that 
\begin{equation}
\label{partial_T-C}
\begin{split}
-\partial_T C^T(t,k)\big|_{t=T}
&= \mu\left( C(T,k) - k\,\partial_k C(T,k) \right) +
 \E^{\T}\left( x_T(x_T-k)_+ \right) - \E^{\T}\left( y_T\,\theta(x_T-k) \right).
\end{split}
\end{equation}
Substituting this result into~\eqref{total_diff} and recalling~\eqref{Dupire1} gives 
that
\begin{align*}
\sigma^2(T,k) &= 2\frac{ \partial_T C(T,k) - \mathbb{E}^{\T}(x_T-k)_+
\partial_T P_t(T)\big|_{t=T} - 
\mathbb{E}^{\T} \theta(x_T-k)\,\partial_T f_t(T)\big|_{t=T} }
{ \partial_{kk} C(T,k) }\\
&= 2\,\frac{ \partial_T C(T,k) + \mu\left( C(T,k) - k\,\partial_k C(T,k) \right) + 
\mathbb{E}^{\T}\left( x_T(x_T-k)_+ \right) - \mathbb{E}^{\T}\left( y_T\,\theta(x_T-k) \right) }
{\partial_{kk} C(T,k)},
\end{align*}
which proves Theorem~\ref{ImplicitFormula}.

\subsection{Proof of Theorem~\ref{thm:first_order}}
\label{proof:first_order}

Fix $T>0$ and consider
 the semigroup $(P_{t}^{\varepsilon},\, t\in [0, T])$ corresponding to the diffusion process $(x_t,\, t\in[0,T])$
 that follows equation~\eqref{eps-x}, 
i.e.
$$P_{t}^{\varepsilon}q(x)=\E(q(x_t)|x_0=x)$$
for an appropriate function $q$.
For ease of notations, we will write $P_{t}^{\varepsilon}q=\E(q(x_t))$.
In addition, denote
$$\xi(x)=x(x-k)_+,\quad \eta(x)=(x-k)_+,\quad \theta_k(x)=\theta(x-k)
\quad\text{and}\quad  \delta_k(x)=\delta(x-k),$$
where $\delta$ is the Dirac delta-function, 
and recall some useful equations
\begin{equation}
\label{used-equations}
\partial_{xx}\xi(x)=2\theta_k(x)+x\delta(x-k),\quad \partial_{xx}\eta(x)=\delta(x-k).
\end{equation}
In the above notations we have that
\begin{align*}
\E(x_T(x_T-k)_+)&=P_T^\eps\xi,\\
\E\left(y_T\theta(x_T-k)\right)&=\E\left(y_T\theta_k(x_T)\right)
=\int_0^TP_t\sigma^2(x_t,t)P^\eps_{T-t}\theta_kdt
=(P^{\varepsilon}\ast \sigma^2 P^{\varepsilon})_{T} \theta_k, 
\end{align*}
where $\ast$ denotes the convolution,  and 
by $\sigma^2$ we denot the operator of multiplication on the function $\sigma^2(x_t,t)$, so that 
 \begin{equation}
 \label{A5}
A 
=P^{\varepsilon}_{T}\xi - 
(P^{\varepsilon} \ast\sigma^2P^{\varepsilon})_{T}\theta_k.
\end{equation}
Observe now that 
the generator of the semigroup is 
$$G^{\eps}=\frac{1}{2} 
\sigma^2(t,x)\partial_{xx}=\frac{1}{2}\left(\sigma_0^2(t)+\eps\Delta\sigma^2(t,x)\right)\partial_{xx}
=G+\eps B,
$$
where 
$B = \frac{1}{2} \Delta \sigma^2(t,x)\partial_{xx}$ 
and the operator 
$
G = \frac{1}{2} \sigma_0^2(t)\partial_{xx}
$
 is the generator of the semigroup 
$(P^0_{t},\, t\in [0,T])$ of the
 diffusion  process  with the zero drift and 
time-dependent local volatility $\sigma_0(t)$, i.e. the process corresponding 
to the case when $\eps=0$.
Thus, 
the semigroup  $P_{t}^{\varepsilon}$  can be regarded as 
 a perturbation of the (unperturbed) semigroup $P^0_{t}$, which is determined by the equation
 \[
P_{t}^0 = e^{\frac{1}{2} w_{t,T} \partial_{xx}},\quad t\in[0,T],
\]
where
\begin{equation}
\label{w:eps=0}
w_{t,T} = \int_t^T \sigma_0^2(s)\,ds,\quad t\in [0,T].
\end{equation}
By Duhamel’s principle, 
\begin{equation}
\label{Duh}
\begin{split}
P_{T}^{\varepsilon} &= P_{T}^0 + \varepsilon \int_0^T P_{t}^{\varepsilon} B P_{T-t}^0 \,dt
=P_{T}^0 + \varepsilon \left(P^{\varepsilon} \ast B P^0\right)_{T}
 = P_T^0 + \frac{1}{2} \varepsilon (P^{\varepsilon} \ast \Delta \sigma^2 \partial_{xx}P^0)_T,
\end{split}
\end{equation}
so that the equation~\eqref{A5} becomes
\[
A = P_T^0\xi + \frac{1}{2} \varepsilon 
(P^{\varepsilon} \ast \Delta \sigma^2 \partial_{xx} P^0)_{T}\xi-
 (P^{\varepsilon} \ast \sigma^2 P^{\varepsilon})_{T}\theta_k.
\]
A direct calculation gives that
\begin{equation}
\label{P-xi=w}
P_{T}^0\xi= w_{0,T} P^0_{T}\theta_k,
\end{equation}
where $w_{0,T}$ is defined in~\eqref{w:eps=0}.
Using commutativity of operators $P^0_T$ and $\partial_{xx}$, and~\eqref{used-equations}
 we obtain that 
\begin{equation}
\label{partial-P-xi}
\partial_{xx} P_{T}^0\xi = 
 P_{T}^0 \partial_{xx} \xi= 
 P_{T}^0\left(2\theta_k + x\delta_k \right),
\end{equation}
Furthermore, observe that
\begin{equation}
\label{=0}
(P^{\varepsilon} \ast \Delta\sigma^2P^0)_T\delta_k = 0.
\end{equation}
Indeed, by the assumption~\eqref{wT2}, we have 
the pricing equality
$
C(T,k) = P^{\varepsilon}_T\eta = P^0_T\eta,
$
where $C(T,k)$ denotes the non-discounted 
price under the  perturbed dynamics.
Therefore, by~\eqref{Duh} and commutativity of $P^0_T$ and $\partial_{xx}$, we obtain that 
\[
 (P^{\varepsilon}\ast\Delta\sigma^2
\partial_{xx}P^0)_T\eta=(P^{\varepsilon}\ast\Delta\sigma^2P^0)_T
\partial_{xx}\eta=0.
\]
Since $\partial_{xx}\eta=\delta_k$ (see~\eqref{used-equations}, we get~\eqref{=0},
as claimed.

Further, by~\eqref{P-xi=w},~\eqref{partial-P-xi} and~\eqref{=0},
\begin{equation*}
A=w_{0,T}P_T^0\theta_k+
\varepsilon (P^{\varepsilon}\ast\Delta\sigma^2P^0)_T\theta_k-(P^{\varepsilon}
\ast\sigma^2P^{\varepsilon})_T\theta_k.
\end{equation*}
Recalling that $\sigma^2=\sigma_0^2+\Delta\sigma^2$ and using 
the equation
$(P^{\varepsilon}\ast\sigma^2_0P^{\varepsilon})_T=w_{0,T}P_T^{\varepsilon}$
 rewrite the preceding equation for $A$
as follows
\begin{align}
\label{A6}
A
&=w_{0,T}\Big(P^0_T-P^{\varepsilon}_T\Big)\theta_k-
\varepsilon\,
(P^{\varepsilon}\ast\Delta\sigma^2(P^0-P^\eps))_T\theta_k.
\end{align}
By assumption~\eqref{wT2}, 
$w_{0,T}=w(T,k).$
Regarding the Bachelier's price as a function of the strike and the
 total variance $w$ (see Remark~\ref{BH(w)}) we have the following 
 equations
\begin{align*}
P^0_T\theta_k&=-\partial_k\mathrm{BH}(k,w(T, k)),\nonumber\\[1mm]
P^{\varepsilon}_T\theta&
=-\partial_k\mathrm{BH}(k,w(T, k))-\partial_w\mathrm{BH}(k,w(T, k))\partial_kw(T,k)\\
&=-\partial_k\mathrm{BH}(k,w(T, k))-
\frac{1}{2}p(T,k)\partial_kw(T,k),
\end{align*}
where 
\[
p(T,k)=\partial_w\mathrm{BH}(k,w(T,k))=\frac{1}{\sqrt{2\pi w(T,k)}} \exp\left( -\frac{k^2}{2 w(T,k)} \right).
\]
Collecting the above equations 
gives  the following for  the first term in the right-hand side of~\eqref{A6}  
$$
w_{0,T}\Big(P_T^0-P^{\varepsilon}_T\Big)\theta_k
=\frac{1}{2}p(T,k) w(T,k) \partial_kw(T,k).
$$
Finally, since the  term 
$- \varepsilon (P^{\varepsilon} \ast 
 \Delta \sigma^2(P^0-P^\eps))_T\theta_k$ in~\eqref{A6} is of order $\eps^2$,
  we conclude that
\[
A =  \frac{1}{2} p(T,k)w(T,k) \partial_kw(T,k) + o(\varepsilon),
\]
which proves Theorem~\ref{thm:first_order}.

\subsection{Proof of Theorem~\ref{thm:linear}}
\label{proof:linear}

Start with the following lemma, which might be of interest on its own right.
\begin{lemma}
\label{IG}
$$
\E(x_T(x_T-k)_+)
=
\frac{1}{2}\mathbb{E}\left(\tau\,{\bf 1}_{\{\tau>a+bk\}} \right)
+\frac{1}{2a}\mathbb{E}\left(\tau^2\,{\bf 1}_{\{\tau>a+bk\}} \right),
$$
where $\tau$ is a random variable which has an inverse Gaussian distribution  (IG distribution, 
aka Wald distribution, e.g., see Johnson et.al.~\cite{JohnsonKotz} and references therein)
with parameters $a$ and $a^2/b^2$, i.e. $\tau\sim \mathrm{IG}(a,a^2/b^2)$.
\end{lemma}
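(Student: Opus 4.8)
The plan is to identify the law of $x_T$ explicitly in the linear case $w(T,k)=a+bk$ and then reduce the claimed identity to a single Gaussian integral that can be matched against the inverse Gaussian tail moments. Since $C(T,k)=\E((x_T-k)_+)=\mathrm{BH}(k,a+bk)$, the law of $x_T$ has density $p_{x_T}(x)=\partial_{kk}C(T,k)|_{k=x}$. First I would compute this density in closed form: writing $w=a+bx$ and differentiating $\mathrm{BH}(s,w(s))$ twice in the strike $s$ at $s=x$ gives $p_{x_T}(x)=\mathrm{BH}_{ss}+2b\,\mathrm{BH}_{sw}+b^2\mathrm{BH}_{ww}$, and the standard Bachelier relations $\mathrm{BH}_{ss}=n(x,w)$, $\mathrm{BH}_{sw}=-\tfrac{x}{2w}n(x,w)$, $\mathrm{BH}_{ww}=\tfrac14\big(\tfrac{x^2}{w^2}-\tfrac1w\big)n(x,w)$, where $n(x,w)=\tfrac{1}{\sqrt{2\pi w}}\exp(-x^2/2w)$, yield $p_{x_T}(x)=n(x,w)\big[\tfrac{a}{w}+\tfrac{b^2(x^2-w)}{4w^2}\big]$.

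The crucial link to the inverse Gaussian law is an exponent-matching observation: for $w=a+bx$ one has $\tfrac{x^2}{2w}=\tfrac{(w-a)^2}{2b^2w}$, so the $\mathrm{IG}(a,a^2/b^2)$ density $f$ satisfies $f(a+bx)=\tfrac{a}{|b|(a+bx)}\,n(x,a+bx)$. I would then change variables $t=a+bx$ in the two tail moments on the right-hand side of the lemma. For $b>0$ the event $\{\tau>a+bk\}$ becomes $\{x>k\}$, and the Jacobian together with the factors $t$ and $t^2$ collapse against $\tfrac{a}{w}$: $\tfrac12\E(\tau\,{\bf 1})$ becomes $\tfrac{a}{2}\int_k^\infty n(x,w)\,dx$ and $\tfrac1{2a}\E(\tau^2\,{\bf 1})$ becomes $\tfrac12\int_k^\infty (a+bx)\,n(x,w)\,dx$. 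Hence the lemma reduces to the single identity $\E(x_T(x_T-k)_+)=\tfrac12\int_k^\infty (2a+bx)\,n(x,a+bx)\,dx$.

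To verify this reduced identity I would match second $k$-derivatives together with boundary behaviour. A one-line Leibniz computation gives $\partial_{kk}\E(x_T(x_T-k)_+)=k\,p_{x_T}(k)$, while differentiating the reduced right-hand side twice gives $-\tfrac12 H'(k)$ with $H(x)=(2a+bx)\,n(x,a+bx)$. Substituting the explicit $p_{x_T}$ and using $2a+bk=a+w$, the required equality $-\tfrac12 H'(k)=k\,p_{x_T}(k)$ collapses, after dividing by $n(k,w)$ and clearing denominators, to a polynomial identity in $k$ and $w=a+bk$ that holds term by term. Since (for $b>0$) both sides vanish, together with their first $k$-derivatives, as $k\to+\infty$, equality of the second derivatives forces equality of the functions themselves. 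As a consistency check, letting $k\to-\infty$ makes the indicator trivial and returns $\E(x_T^2)=a+\tfrac12 b^2$, in agreement with $\epsilon=\tfrac12 b^2$ asserted in the theorem.

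The work here is bookkeeping rather than conceptual, and this is where the main obstacle lies: carrying out the mixed-derivative computation of $p_{x_T}$ and verifying the resulting polynomial identity without sign errors. A secondary subtlety is the case $b<0$, where the substitution $t=a+bx$ is orientation-reversing and one must restrict to the strike range with $a+bk>0$; since the inverse Gaussian parameters $(a,a^2/b^2)$ remain positive and the tails of $\tau$ and $\tau^2$ are integrable, the change of variables and the matching argument carry over with the obvious sign adjustments.
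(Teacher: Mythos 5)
Your proof is correct, and it rests on the same key observation as the paper's: the exponent identity $x^2/(2w)=(w-a)^2/(2b^2w)$ for $w=a+bx$, which turns the Gaussian kernel $n(x,a+bx)$ into the $\mathrm{IG}(a,a^2/b^2)$ density under the substitution $u=a+bx$, so that both arguments reduce the lemma to the single identity $\E(x_T(x_T-k)_+)=\tfrac12\int_k^\infty(2a+bx)\,n(x,a+bx)\,dx$. Where you differ is in how that reduced identity is established. The paper integrates by parts twice to write $\E(x_T(x_T-k)_+)=\int_k^\infty\bigl(C(T,x)-x\,\partial_xC(T,x)\bigr)\,dx$ and then invokes the standard Bachelier relation \eqref{eq:CminusTerm}, which in the linear case gives $C-x\partial_xC=p(T,x)\bigl(a+\tfrac{b}{2}x\bigr)$ directly — no explicit density and no polynomial verification needed. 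You instead compute the density $p_{x_T}$ from $\partial_{kk}\mathrm{BH}(k,a+bk)$ and match second $k$-derivatives plus decay at the upper end of the support; I checked your density formula and the polynomial identity $-\tfrac12H'(k)=k\,p_{x_T}(k)$, and both are right, as is the boundary argument (for $b<0$ the support of $x_T$ is bounded above by $-a/b$, so both sides still vanish beyond it). The paper's route is shorter and avoids the sign-error-prone mixed-derivative bookkeeping you flag as the main obstacle; yours is more self-contained in that it never needs the identity \eqref{eq:CminusTerm} as an input, and it produces the explicit law of $x_T$ as a by-product, which also makes the consistency check $\E(x_T^2)=a+\tfrac12b^2$ immediate.
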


\begin{proof}[Proof of Lemma~\ref{IG}]

Start with expressing the expectation  $\E(x_T(x_T-k)_+)$
 in terms of the call value $C(T,k)=\E((x_T-k)_+)$ (with maturity $T$ and strike $k$).
To this end, recalling  that the second derivative 
 $\partial_{xx} C(T,x)$ is equal to the pdf of $x_T$,
and integrating by parts, we get that 
\begin{align*}
\E(x_T(x_T-k)_+)
&
=\int_k^{\infty}
x(x-k)\partial_{xx}C(T,x)\,dx\\
&=\left(x(x-k)\partial_xC(x))\right]_k^{\infty}-\int_k^{\infty}(2x-k)\partial_x C(x)\,dx\\
&=-2\int_k^{\infty}x\partial_xC(x)\,dx-kC(k)\\
&=
k\,C(T,k)+2\int_{k}^{\infty}C(T,x)\,dx.
\end{align*}
Then, since
$$kC(T,k)+\int_k^{\infty}C(T,x)\,dx=\int_k^{\infty}x\partial_xC(x)\,dx,$$
we obtain that 
\begin{equation*}
\E(x_T(x_T-k)_+)
= \int_{k}^{\infty} \left(C(T,x)-x\partial_x C(T,x) \right)\,dx.
\end{equation*}
In the linear case $w(T,x)=a+bx$ 
the equation~\eqref{eq:CminusTerm} reduces to
\begin{equation*}
C(T,x)-x\,\partial_x C(T,x)
= p(T,x) \left( a+\frac{b}{2}x \right),
\end{equation*}
where 
$$p(T, x)=\frac{1}{\sqrt{2\pi(a+bx)}} \exp\left(-\frac{x^2}{2(a+bx)}\right),\quad x>-\frac{a}{b}.$$
Therefore,
\begin{equation*}
\E(x_T(x_T-k)_+)=\int_k^{\infty}p(T,x) \left(a+\frac{b}{2}x \right)dx
=
\frac{1}{2b} \int_{a+bk}^{\infty}
\frac{\exp\left( -\frac{(u-a)^2}{2b^2u} \right)}{\sqrt{2\pi} u^{3/2}} (a+u) u\,du.
\end{equation*}
It is left to note that  the function 
$$g(u)=\frac{a}{b}\frac{1}{\sqrt{2\pi}u^{3/2}}\exp\left(-\frac{(u-a)^2}{2b^2u}\right),\quad u>0,$$
is the density of an inverse Gaussian (IG) distribution, $\mathrm{IG}\left(a,a^2/b^2\right)$. The lemma is proved.
\end{proof}

By the properties of  IG distribution, 
$$
\E(\tau)=a \quad\text{and}\quad \E(\tau^2)=a^2+ab^2.
$$
Therefore, since  $\E(x_T)=0$, we get  that 
$$
\E(x_T^2)
= \frac{1}{2}\E(\tau)+\frac{1}{2a}\E(\tau^2)
= a + \frac{1}{2}b^2,
$$
and, hence, the adjustment term in the linear case  is 
\begin{equation*}
\epsilon
= \E(x_T^2)- \E(a+bx_T)
= \E(x_T^2)-a= \frac{1}{2}b^2,
\end{equation*}
as claimed.

Further, the   truncated 
moments of $\tau\sim \mathrm{IG}\left(a,a^2/b^2\right)$  are given by 
\begin{align*}
\E(\tau\,{\bf 1}_{\{\tau>a+bk\}})
   &=a\,\Phi\bigl(-\delta_1(a+bk)\bigr)
     +a\,e^{2a/b^2}\,\Phi\bigl(\delta_2(a+bk)\bigr), \notag\\
\E(\tau^{2}\,{\bf 1}_{\{\tau>a+bk\}})
   &=b^2\,e^{2a/b^2}
     \frac{2a}{b}\sqrt{a+bk}\phi\left(\delta_2(a+bk)\right)\\
     & +e^{-2a/b^2}\left(a+\frac{a^2}{b^2}\right)\Phi\!\left(-\delta_1(a+bk)\right)
  +\left(a-\frac{a^2}{b^2}\right)\Phi\left(\delta_2(a+bk)\bigr)\right),
\end{align*}
where $\Phi$ and $\phi$ are the cumulative distribution function and the probability density function 
of the standard normal distribution,
respectively,
\[
\delta_1(x)=\frac{a}{b\sqrt{x}}\Bigl(\frac{x}{a}-1\Bigr)
\quad\text{and}\quad
\delta_2(x)=-\frac{a}{b\sqrt{x}}\Bigl(\frac{x}{a}+1\Bigr).
\]
Expanding in a Taylor series with respect to the parameter $b$
yields that
$$
\frac{\A}{p(T,k)} = \frac{1}{2}b(a+bk+b^2) + o(b^3).
$$
The theorem is proved.

\section{Multi-factor model}
\label{SecNFactorCase}

The local volatility framework can be extended to to multi-factor Cheyette models. The extension is relatively straightforward, so 
we provide only a brief outline of the procedure, highlighting the main modifications required. 

Start with recalling the model  (see, e.g., Andersen and Piterbarg~\cite{AndersenPiterbarg2}).
 Under the risk-neutral measure \(\mathbb Q\) the model equations are
\begin{align*}
  d x_t &= \bigl( y_t \e-\mu x_t\bigr)\,dt + \sigma_r(t,x_t)\,dW_t,\\
  d y_t &= \bigl(\sigma_r(t,x_t)\sigma_r(t,x_t)^{\top}-\mu y_t-y_t\mu^{\top}\bigr)\,dt,
 \label{eq:ySDE}
\end{align*}
where \(x_t=(x_{1,t},\ldots, x_{d,t})^\top\in\mathbb R^d\) is the state vector, 
\(y_t=(y_{ij,t})_{i,j=1}^d\) is a real $d\times d$ symmetric matrix, 
$
{\bf\mu}=\mathrm{diag}(\mu_1,\dots,\mu_d)
$
is 
a  diagonal mean reversion matrix, $ \sigma_r(t,x)$ is a volatility matrix, 
 $\e\in\R^d$ is a vector whose components are all equal to $1$ and $W_t=(W_{1,t},\ldots,W_{d,t})^\top$
 is a d-dimensional standard Brownian motion.
 We assume that 
 $
 \sigma_r(t,x)=\sigma(t, \bar{x}) V,
 $
  where $ \bar{x} = \e^\top x = \sum_{i=1}^dx_{i}$ for $x=(x_1,\ldots,x_d)^\top\in \R^d$, 
  $\sigma(\cdot, \cdot)$ is a deterministic function of two
  variables and $V$ is a $d\times d$ real matrix which satisfies 
 the normalisation condition
$\e^{\top}VV^{\top}\e=1$.
In these notations,
\begin{align*}
  d x_t &= \bigl(y_t \e -\mu x_t\bigr)\,dt + \sigma(t, \bar{x}_t) V dW_t,\\
  d y_t &= \bigl(\sigma^2(t, \bar{x}_t)V V^{\top}-\mu y_t-y_t\mu^{\top}\bigr)\,dt.
\end{align*}
In the multi-factor case
the bond price and the  instantaneous forward rate are given by 
\begin{equation}
\label{P-mult}
  P_t(T)=e^{-G(T-t)^{\top}x_t-\tfrac12\,G(T-t)^{\top}y_t\,G(T-t)}
\end{equation}
and 
\begin{align}
\label{forw-mult}
 f_t(T)&=\e^{\top}e^{-\mu (T-t)}(x_t + y_t\,G(T-t)),\\
r_T &= \e^{\top}x_T 
\end{align}
respectively, 
where $G(t) = {\mu}^{-1}(1-e^{-\mu t})\e$.

\begin{theorem}
\label{ImplicitFormula2F}
In the multi-factor Cheyette model, the local volatility is given by
\begin{equation*}
\label{localVol2F}
\begin{split}
&\sigma^2(T,k) \\
&=
 2\,\frac{ \partial_T C(T,k) + \mu_{{\rm eff}}\left( C(T,k) - k\,\partial_k C(T,k) \right) + 
  \mathbb{E}^{\T}\left(r_T(r_T-k)_+\right) -
   \mathbb{E}^{\T}\left(\theta(r_T-k)\, \e^\top y_T\e\right)}
{ \partial_{kk} C(T,k) },
\end{split}
\end{equation*}
where
$$
\mu_{{\rm eff}} = 
\frac{\mathbb{E}^{\T}\left(\theta(r_T-k)\, \e^{\top} 
\mu x_T\right) }{\mathbb{E}^{\T}\left( \theta(r_T-k)\, r_T\right)}.
$$
\end{theorem}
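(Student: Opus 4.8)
\emph{The plan is to replicate, mutatis mutandis, the argument used to prove Theorem~\ref{ImplicitFormula}}, replacing the scalar state variables $x_T,y_T$ by their vector/matrix analogues and the scalar mean reversion by the effective one. Fix $T>0$ and, for $t\le T$, set $C^T(t,k)=\E^{\T}\left((f_t(T)-k)_+\right)$, so that $C^T(T,k)=C(T,k)$ since $f_T(T)=r_T$ by~\eqref{forw-mult}. As in the one-factor case I would first pass, via the Markovian projection of $f_t(T)$, to the Dupire framework and apply it at $t=T$. The point to check here is that at $t=T$ the projected local variance reduces to $\sigma^2(T,k)$: the instantaneous variance of $f_t(T)$ at $t=T$ equals $\|\e^{\top}\sigma_r(T,x_T)\|^2=\sigma^2(T,\bar x_T)\,\e^{\top}VV^{\top}\e=\sigma^2(T,\bar x_T)$ by the normalisation $\e^{\top}VV^{\top}\e=1$, and conditioning on $f_T(T)=r_T=\bar x_T=k$ then gives $\sigma^2(T,k)$. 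Hence $\sigma^2(T,k)=2\,\partial_tC^T(t,k)\big|_{t=T}/\partial_{kk}C(T,k)$, exactly as in~\eqref{Dupire1}.

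Next I would use the same decomposition $\partial_tC^T(t,k)\big|_{t=T}=\partial_TC(T,k)-\partial_TC^T(t,k)\big|_{t=T}$ and evaluate $-\partial_TC^T(t,k)\big|_{t=T}$ through the change from the $T$-forward to the $(T+dT)$-forward measure, which produces, exactly as in~\eqref{eq:generic}, the two contributions governed by $\partial_TP_t(T)\big|_{t=T}$ and $\partial_Tf_t(T)\big|_{t=T}$. These are the only genuinely new computations. Writing $G(s)=\mu^{-1}(1-e^{-\mu s})\e$, one has $G(0)=0$ and $G'(0)=\e$; differentiating~\eqref{P-mult} and~\eqref{forw-mult} and setting $t=T$ then yields $\partial_TP_t(T)\big|_{t=T}=-\e^{\top}x_T=-r_T$ and $\partial_Tf_t(T)\big|_{t=T}=\e^{\top}y_T\e-\e^{\top}\mu x_T$, the natural matrix analogues of the scalar identities used in the one-factor proof.

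Substituting these gives
\begin{equation*}
-\partial_TC^T(t,k)\big|_{t=T}=\E^{\T}\left(r_T(r_T-k)_+\right)-\E^{\T}\left(\theta(r_T-k)\,\e^{\top}y_T\e\right)+\E^{\T}\left(\theta(r_T-k)\,\e^{\top}\mu x_T\right).
\end{equation*}
The remaining work is to handle the last term, which is where the single genuine obstacle lies. In the one-factor case $\mu$ is a scalar and $\e^{\top}\mu x_T=\mu r_T$, so the term equals $\mu\,\E^{\T}(r_T\theta(r_T-k))=\mu(C(T,k)-k\partial_kC(T,k))$ via the identity $r_T\theta(r_T-k)=(r_T-k)_+-k\partial_k(r_T-k)_+$. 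In the multi-factor setting this factorisation fails at the level of random variables, since $\e^{\top}\mu x_T=\sum_i\mu_ix_{i,T}$ mixes components carrying different reversion rates and is not a deterministic multiple of $r_T=\sum_ix_{i,T}$. The remedy is to \emph{define} the effective reversion as the ratio $\mu_{\rm eff}=\E^{\T}(\theta(r_T-k)\,\e^{\top}\mu x_T)/\E^{\T}(\theta(r_T-k)\,r_T)$, which by construction collapses the offending term into $\mu_{\rm eff}\,\E^{\T}(r_T\theta(r_T-k))=\mu_{\rm eff}(C(T,k)-k\partial_kC(T,k))$ via the same identity. Combining this with the decomposition above and dividing by $\partial_{kk}C(T,k)$ reproduces the stated formula, completing the proof.
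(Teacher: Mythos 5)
Your proposal is correct and follows essentially the same route as the paper's proof: Markovian projection of $f_t(T)$ to justify the Dupire quotient, verification via the normalisation $\e^{\top}VV^{\top}\e=1$ that the projected variance at $t=T$ collapses to $\sigma^2(T,k)$, the derivative identities $\partial_TP_t(T)\big|_{t=T}=-r_T$ and $\partial_Tf_t(T)\big|_{t=T}=\e^{\top}y_T\e-\e^{\top}\mu x_T$, and the observation that $\mu_{\rm eff}$ is defined exactly so that the mixed term $\E^{\T}(\theta(r_T-k)\,\e^{\top}\mu x_T)$ reduces to $\mu_{\rm eff}(C-k\,\partial_kC)$. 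You in fact spell out the final step more explicitly than the paper, which omits those details.
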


\begin{proof}[Proof of Theorem~\ref{ImplicitFormula2F}]
\label{proof:ImplicitFormula2F}
The proof of Theorem~\ref{ImplicitFormula2F} follows the same 
steps as in the one-factor case, with additional details provided below.
First, under the $\T$‑forward measure $\mathbb Q^{T}$ we have the equation for the instantaneous 
forward rate
$$ df_t(T)= \sigma_T(t,x_t) \,dW_t^{\T},\quad, t\in [0,T],$$
where    $\sigma_T(t,x)=\e^{\top}e^{-\mu (T-t)} \sigma_r(t,x)$.
The  matrix  $y_t$ can be expressed explicitly in the integral form
\[
  y_t  =  \int_{0}^{t} e^{-\mu (t-s)}\, \sigma_r(s)\sigma_r(s)^{\top}\, e^{-\mu (t-s)}\,ds
\]
and,  hence, the  total variance is given by 
$$
\e^{\top} y_T\e =  \int_{0}^{T}  \sigma_T(t,x_t) \sigma_T(t,x_t)^{\top}\,dt.
$$
By Gyöngy's lemma, 
a one-dimensional Markovian projection  $\hat{f}_t(T)$ of $f_t(T)$
follows the equation
\begin{equation*}
  d \hat{f}_t(T) =  \hat{\sigma}_{T}(t,\hat{f}_t(T))\,d\hat{W_t},
\end{equation*}
where  $\hat{W}_t$ is a one-dimensional standard Brownian motion and 
\[
\hat{\sigma}_{T}(t,k)^2 =\E^{\T}\left(   \sigma_T(t,x_t)   \sigma_T(t,x_t)^{\top} \mid f_t(T) = k\right).
\]
The standard Dupire equation is applicable for process $\hat{f}_t(T)$ i.e.
$$
\hat{\sigma}^2_{T}(t,k) = 2\,\frac{ \partial_t C^T(t,k) }{ \partial_{kk}  C^T(t,k) }.
$$
For $t=T$ we have that
\begin{align*}
  \hat{\sigma}^2_{T}(T,k)&=
  \E^{\T}\left(  \e^{\top} \sigma_r(T,x_T)  \sigma_r(T,x_T)^{\top}\e  \mid f_T(T) = k\right) \\
&=\E^{\T}\left( \sigma^2(T,\bar{x}_T)\e^{\top} V V^{\top}\e  \mid\bar{x}_T = k\right)
= \sigma^2(T,k). 
\end{align*}
Hence,
$$
 \sigma^2(T,k) =  2\,\frac{ \partial_t C^T(t,k)\big|_{t=T} }{ \partial_{kk} C(T,k) }.
$$
The rest of the proof is similar to the one-factor case. Namely, 
it remains to use 
equation~\eqref{eq:generic}  for expressing 
the derivative $\partial_tC^T(t,k)\big|_{t=T}$ in terms of derivatives 
 $\partial_T P_t(T)\big|_{t=T}  = -\e^{\top}x_T= -f_T(T)$ (i.e. of the  bond price~\eqref{P-mult})   and
     $\partial_T f_t(T)\big|_{t=T} =\e^{\top} y_T\e - \e^{\top} \mu x_T$ (i.e. of 
      the instantaneous forward rate~\eqref{forw-mult}) and 
   to apply the analogue of equation~\eqref{partial_T-C} to complete the proof (we skip the details).
\end{proof}

By property of the 
Markovian projection marginal distribution of $f_t(T)$ and  $\hat{f}_t(T)$ at $t = T$ are identical.
Hence, we have the identity 
$$\mathbb{E}^T\left(f_T(T)\left(f_T(T)-k\right)_+\right) =
 \mathbb{E}\left(\hat{f}_T(T)(\hat{f}_T(T)-k)_+\right)$$
 and the approximation
\begin{align*}
 \mathbb{E}^{\T}\left(\theta(f_T(T)-k)\e^{\top} y_T\e\right)& = 
 \int_0^T \mathbb{E}^{\T}
 \left(\theta(f_T(T)-k)\sigma_T(t,x_t) {\sigma_T(t,x_t) }^{\top}\right)dt\\
 &
\approx \mathbb{E}\left(\theta(\hat{f}_T(T)-k)\,  \hat{\sigma}^2_{T}(t,\hat{f}_T(t))\right).
\end{align*}
Proceeding similarly to the one-factor case,  we obtain 
the following approximation for the multi-factor analogue of the quantity $\A$
\begin{align*}
&
\mathbb{E}^T\left(f_T(T)\left(f_T(T)-k\right)_+\right)  - \mathbb{E}^T\left(
 \theta\left(f_T(T)-k\right)\e^{\top} y_T\e\right) \\
&\approx \mathbb{E}\left(\hat{f}_T(T)(\hat{f}_T(T)-k)_+\right) - 
\mathbb{E}^T\left(\theta(\hat{f}_T(T)-k )   \int_0^T  \,  \hat{\sigma}^2_{T}(t,\hat{f}_T(t)) dt\right) \\
&\approx \frac{1}{2}p(T,k)\left( (\partial_kw)^3 + \partial_kw \right),
\end{align*}
which gives the approximation
$$
\sigma^2(T, k)
 \approx \frac{ \partial_T w + \mu_{\text{eff}} \left( 2w - k\,\partial_k w \right) + w\,\partial_k w }
{\left(1 - \frac{k\,\partial_k w}{2w} \right)^2 +
 \frac{1}{2}\left( \partial_{kk}w - \frac{(\partial_k w)^2}{2w} \right)} + (\partial_k w)^3
$$
for the local volatility in the multi-factor case.

\begin{remark}
{\rm 
Note that, in comparison with the one-factor case, the following additional 
term arises in the multi-factor setting
$\mathbb{E}^{\T}\left( \theta(f_T(T)-k)\e^{\top} \mu x_T   \right)$.
In the  Gaussian case this term can be estimated by
$$
 \mathbb{E}^{\T}\left( x_{T}\Big|\e^{\top} x_T = r \right) =
  r \frac{\Cov(x_{T}, \e^{\top} x_T)}{\Var(\e^{\top} x_T)}=  r \frac{\bar{y}_T\e}{\e^{\top} \bar{y}_T\e},
$$
 where 
$\bar{y}_T$ denotes the matrix $y_T$, calibrated within the 
Gaussian model to match the at-the-money (ATM) term structure of the  implied volatility.
This gives 
 the following equation for the effective mean-reversion
$$
\mu_{\text{eff}} =  \frac{\e^{\top} \mu \bar{y}_T\e}{\e^{\top} \bar{y}_T\e}.
$$
In Section~\ref{2-factor} we show how 
 this quantity can be evaluated in the case of the two-factor model. 
 }
 \end{remark}

\section{Example: the two-factor model}
\label{2-factor}

To illustrate the multi-factor extension, we now examine the two-factor Cheyette model. 
Let 
\[
x_t = 
\begin{pmatrix}
x_{1,t} \\
x_{2,t}
\end{pmatrix},
\quad
y_t =
\begin{pmatrix}
y_{1,t} & y_{3,t} \\
y_{3,t} & y_{2,t}
\end{pmatrix},
\quad
\mu = 
\begin{pmatrix}
\mu_1 & 0 \\
0 & \mu_2
\end{pmatrix},
\]
where, for convenience, we denoted $y_{11,t}=y_{1,t},\,  y_{22,t}=y_{2,t}$ and $y_{12,t}=y_{21,t}=y_{3,t}$.
The volatility matrix is given by 
\[
\sigma_r(t,x_t) = \sigma(t, x_{1,t}+x_{2,t})
\begin{pmatrix}
\alpha & 0 \\
\rho \beta & \sqrt{1-\rho^2}\beta
\end{pmatrix},
\]
where $-1 \leq \rho \leq 1$, $\alpha>0$ and $\beta>0$ are given constants, 
and the normalization condition is 
\begin{equation}
\label{norm-2}
\e^{\top}VV^{\top}\e=1 \iff
\alpha^2 + 2\rho \alpha \beta + \beta^2 = 1.
\end{equation}
In the Gaussian case, i.e. when $\sigma^2(t,x)=\sigma^2(t)$,
the following approximation 
holds for the effective mean-reversion parameter  
 (see Section~\ref{Gaussian} for details)
\[
\mu_{\text{eff}}(T) \approx \frac{\mu_1 + \mu_2}{2} + \frac{\mu_1 - \mu_2}{2w(T)}
\frac{1}{2\gamma} \int_0^T \left( e^{(t-T)\lambda_2}(a+\gamma b) - 
e^{(t-T)\lambda_1}(a-\gamma b) \right) u(t)\,dt,
\]
where 
\begin{equation}
\label{u}
u(t)= \partial_t w(t)+ (\mu_1+\mu_2) w(t),\quad t\geq 0,
\end{equation}
$w(t)=(w(t),\, t\geq 0),$ is the implied total variance,
and
\begin{align}
\label{gamma}
\gamma &= \sqrt{ (1+2\rho\alpha\beta)^2 - (2\alpha\beta)^2 },
\\
\label{lamb1}
\lambda_{1} &= \mu_1 + \mu_2 + (\beta^2-\alpha^2)
\frac{\mu_1-\mu_2}{2}+\gamma\frac{\mu_1-\mu_2}{2},
\\
\lambda_{2} &= \lambda_1- \gamma(\mu_1-\mu_2),
\label{lamb2}\\
\nonumber
a &= (\alpha^2 - \beta^2)^2 - 2(\alpha^2+\beta^2),\\
\nonumber
b &= \alpha^2 - \beta^2.
\end{align}
We use the ATM term structure of the implied variance, i.e. 
$w(t) = t\sigma^2_{\text{imp}}(t,0)$,  to evaluate $u(t)$.
Figure~\ref{fig:2F} presents the implied volatility curve for a 10-year option,
 together with Monte Carlo estimates using (i) a one-factor model 
 with the  mean reversion parameter  $\mu=0.5$ and (ii) a two-factor model
 with parameters $ \rho = 0.5$, $\mu_1 = 0.0005$, $\mu_2 = 0.5$, $\alpha = 0.7$.
\begin{figure}[h!]
\centering
\includegraphics[scale=1]{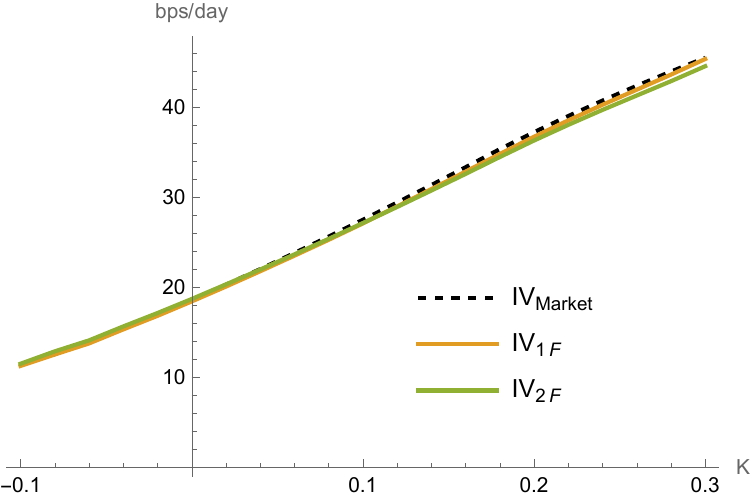}
\caption{Ten-year implied volatility curve and 1F and 2F cases.}
\label{fig:2F}
\end{figure}

\section{Calibration of the Cheyette model to swaptions}
\label{swaption:calibration}

 In this section we  discuss a method for calibrating the Cheyette 
model to swaption market. 
As mentioned above, a direct swaption-to-local-volatility
 construction is difficult in the Cheyette model because swaption payoffs depend on the entire bond curve and 
 therefore on the two-dimensional state $(x_T, y_T)$. 
In practice, however, market information is available primarily in the form of swaption prices (or implied
 volatilities). The calibration problem therefore amounts to translating swaption data into the surface $w(T,k)$. 
 In Section~\ref{calibration-framework}  we
 explain how this can be achieved by matching the density implied by swaption prices with 
 the density of the short rate.
Note that we  focus on the conceptual calibration approach 
 rather than on a specific numerical implementation.
  The density-matching construction presented here should be viewed as one possible calibration approach
  (in principle, the model could also be calibrated directly to swaption prices).
 The precise implementation may depend on data quality and practical constraints and is not pursued in detail here.

\subsection{Swaption-based calibration framework}
\label{calibration-framework}

The objective of the calibration procedure is 
 to construct  the Cheyette local-volatility surface
consistent with observed swaption market data.
We assume that the following market inputs are available on the calibration date.
\begin{itemize}
	\item \textbf{Discount curve.} A curve \( P_0(t) \) used to price fixed-leg cashflows and to define annuities.
	
	\item \textbf{Swaption smile(s).} For each swaption expiry \( T = T_0 \) and underlying swap schedule 
	\( T_0 < T_1 < \dots < T_n \), the market provides either 
	(a) payer/receiver swaption prices across strikes, or 
	(b) Bachelier  implied volatilities \( \sigma^{S}_{\mathrm{imp}}(T,K) \) across strikes \( K \), 
	quoted in absolute strikes, moneyness, or delta (converted to strikes using standard conventions).
\end{itemize}
From these quotes we construct a smooth total implied variance surface for the swap rate,
\[
z(T,K) := T\big(\sigma^{S}_{\mathrm{imp}}(T,K)\big)^2.
\]
In practice, because the local-volatility approximation (12) requires derivatives of implied variance,
it is essential that \( z(T,K) \) be represented by a smooth and arbitrage-consistent interpolant in \( K \)
(and preferably also in \( T \)).

\paragraph{Swap rate and annuity as functions of $(x_T,y_T)$.}

Consider a swap with fixing date $T = T_0$, maturity $T_n$, and payment dates $T_i$, $i=1,\dots,n$. 
The fair swap rate realized at time $T$ is
\[
S_T = \frac{1 - P_T(T_n)}{A_T},
\]
where
\[
A_T = \sum_{i=0}^{n-1} P_T(T_{i+1}) \Delta T_i
\quad\text{and}\quad
\Delta T_i = T_{i+1} - T_i,
\]
is the value at time $T$ of the fixed-leg annuity.
The time-$0$ price of a payer swaption with expiry $T$ and strike $K$ under the $T$-forward measure 
$\QT$ is given by the discounted expectation 
\begin{equation}
V(0;T,K) = P_0(T)\,\mathbb{E}^{\T}\!\left(A_T (S_T - K)^+ \right),
\label{tag57}
\end{equation}
whose evaluation requires knowledge of the joint distribution of
$(S_T, A_T)$, where 
both $A_T$ and $S_T$ are  functions of the pair $(x_T,y_T)$, 
since in the one-factor Cheyette model
\[
P_T(U;x_T,y_T) = \frac{P_0(U)}{P_0(T)}
\exp\!\left(
- G(T,U)x_T - \frac12 G^2(T,U)y_T
\right)
\quad\text{and}\quad
G(T,U) = \frac{1 - e^{-\mu(U-T)}}{\mu},
\]
for $U \ge T$.

\paragraph{One-dimensional reduction: replacing $y_T$ by a function of $x_T$.}

The central obstacle to a direct swaption-to-$w$ translation is that swaption payoffs depend on
\[
y_T = \int_0^T \sigma^2(t,x_t)\,dt,
\]
which is not determined by $x_T$ alone.
Sections~\ref {1st-order} and~\ref{3rd-order} 
motivate   an approximation that replaces $y_T$ by a function of $x_T$
expressed in terms of the short-rate implied total variance surface $w(T,k)$.
Specifically, the third-order adjustment~\eqref{y_T-adj} suggests the proxy
\[
\bar y(T,x) := w(T,x) + \frac12 \big(\partial_x w(T,x)\big)^2,
\]
interpreted as an approximation to the conditional expectation
$\mathbb{E}^{\T}(y_T \mid x_T = x)$.
This approximation is exact in the time-dependent Gaussian case
and remains accurate for moderate implied smiles.
With this approximation, we obtain deterministic representations for bond prices,
the annuity, and the swap rate as functions of $x$ alone, namely
\begin{align}
	\label{A-x}
	A(x) &:= \sum_{i=0}^{n-1}  P_T(T_{i+1};x,\bar y(T,x))\,\Delta T_i,\\
	\label{S-x}
	S(x) &:= \frac{1 -  P_T(T_n;x,\bar y(T,x))}{A(x)}.
\end{align}
The swaption payoff $A_T (S_T - K)^+$ is therefore approximated by
$
A(x_T)\big(S(x_T)-K\big)^+,
$
which is a function of just $x_T$.

\paragraph{From the swaption smile to the swap-rate density.}
Further, let
\[
C^S(T,K) := \mathbb{E}^\T\!\big(A_T (S_T-K)^+\big)
\]
be the non-discounted swaption price and recall the Breeden--Litzenberger identity
\[
C^S_{KK}(T,K) = \mathbb{E}^\T\!\big(A_T \delta(S_T-K)\big).
\]
It is convenient to measure the strike relative to the ATM swap rate and write
\[
K := K - S_0(T).
\]
Let $z(T,K)$ be the corresponding Bachelier implied total variance. Then
\begin{align*}
C^S(T,K)&=A_0\,\mathrm{BH}\!\big(K,z(T,K)\big),\\
C^S_{KK}(T,K)&=A_0\,\mathrm{BH_{KK}}\!\big(K,z(T,K)\big),
\end{align*}
and, hence,
\begin{equation}
\mathbb{E}^\T\!\big(A_T\delta(S_T-K)\big)
=
A_0\,p^{\mathrm{mkt}}_S(T,K;z),
\label{swaption-density}
\end{equation}
where
\[
p^{\mathrm{mkt}}_S(T,K;z)
= \mathrm{BH_{KK}}\!\big(K,z(T,K)\big)
=
\frac{1}{\sqrt{2\pi z}}
\exp\!\left(-\frac{K^2}{2z}\right)
\left[
\left(1-\frac{K z_K}{2z}\right)^2
+\frac12\left(
z_{KK}-\frac{z_K^2}{2z}
\right)
\right]
\]
and $z=z(T,K)$.

\paragraph{Relating the market-implied density $p^{\mathrm{mkt}}_S(T,K)$ to the density of $x_T$.}
By construction, the short-rate implied total variance surface $w(T,k)$ is defined through the
non-discounted short-rate call prices under the $T$-forward measure $\mathbb{Q}_{\T}$, i.e.
\[
C(T,k) := \mathbb{E}^\T\!\big((x_T-k)^+\big)
= \mathrm{BH}\big(k,w(T,k)\big).
\]

The density of $x_T$ under $\mathbb Q_T$ is therefore
\[
p_x(T,k)=C_{kk}(T,k)=\mathrm{BH}_{kk}(k,w(T,k)),
\]
which admits the explicit representation
\[
p_x(T,k;w)=
\frac{1}{\sqrt{2\pi w}}
\exp\!\left(-\frac{k^2}{2w}\right)
\left[
\left(1-\frac{k w_k}{2w}\right)^2
+\frac12\left(w_{kk}-\frac{w_k^2}{2w}\right)
\right].
\]

On the other hand, from~\eqref{swaption-density} the market density 
$p^{\mathrm{mkt}}_S(T,K)$ is obtained from the swaption implied variance surface $z(T,K)$.
Using the relation between $S_T$ and the state variable $x_T$, we obtain
\[
\mathbb{E}^\T\!\big(A_T\delta(S_T-K)\big)
\approx
\frac{A(x_K)}{|S'(x_K)|}\,p_x(T,x_K),
\]
where $x_K$ solves
$
S(x_K)=K.
$
Equating these two representations of $\mathbb{E}^\T\!\big(A_T\delta(S_T-K)\big)$ yields
\[
A_0\,p^{\mathrm{mkt}}_S(T,K;z)
=
\frac{A(x_K)}{|S'(x_K)|}\,p_x(T,x_K;w),
\]
which provides the calibration condition for the short-rate variance surface $w(T,\cdot)$.

\subsection{Numerical example}
\label{swap-example}

We illustrate the calibration procedure for the Cheyette model with local volatility using a numerical example. Specifically, we apply the method to a typical 5Y/5Y payer swaption and compare the resulting model-implied volatilities with market data.

The market input is given by the swaption implied volatility surface 
$\mathrm{IV}_S=\mathrm{IV}_S(k,T)$.
First, we  transform the swaption implied volatilities into implied volatilities 
$\mathrm{IV}_f=\mathrm{IV}_f(k,T)$
of rolling maturity options on the short rate.
This transformation is performed by using the calibration method 
described in the previous section.
The resulting implied forward volatilities $\mathrm{IV}_f$ are shown in Figure~\ref{fig:ivFS}.

\begin{figure}[h!]
\centering
\includegraphics[width=0.65\textwidth]{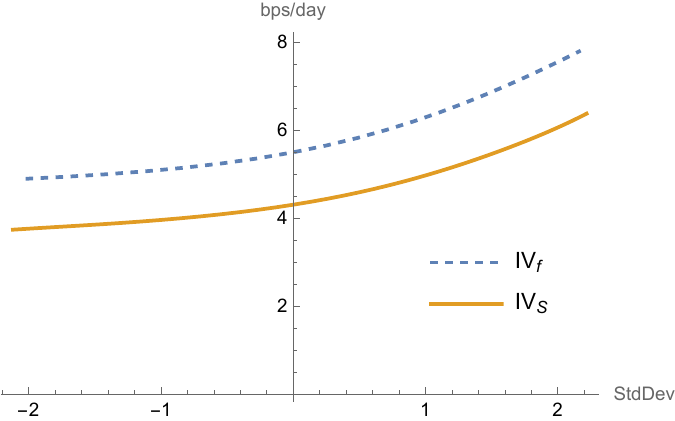}
\caption{Implied forward volatilities $\mathrm{IV}_f$ derived from swaption volatilities $\mathrm{IV}_S$.}
\label{fig:ivFS}
\end{figure}
Next, we use the obtained forward implied volatilities to calibrate the local volatility surface 
via the approximation~\eqref{main-1-factor}.
Based on  this local volatility surface, 
we simulate the dynamics of the forward rate using a Monte Carlo method.
The swaption price is then reconstructed from the simulated  paths, and
 the corresponding implied volatility is computed.
The comparison between the model-implied swaption volatilities and the original market 
swaption volatilities is shown in Figure~\ref{fig:iv}.
\begin{figure}[h!]
\centering
\includegraphics[width=0.65\textwidth]{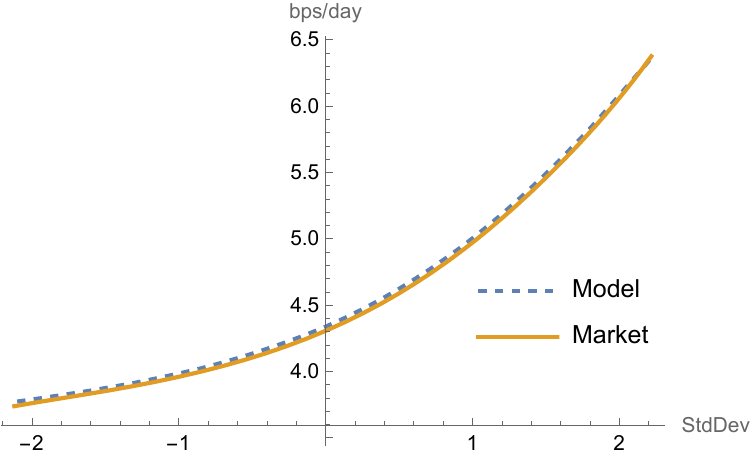}
\caption{Model-implied swaption volatilities compared to market swaption volatilities.}
\label{fig:iv}
\end{figure}
The results demonstrate a good fit between the model and the market data,
 confirming the practical applicability and accuracy of the proposed calibration method.

\section{Appendix. Two-factor  Gaussian case}
\label{Gaussian}

We derive explicit formulas for $y_{1,t}$, $y_{2,t}$, and $y_{3,t}$ in the Gaussian case, where $\sigma(t, x) = \sigma(t)$ is independent of the state variable.
In this case we have that 
\begin{equation*}
\mu_{\text{eff}}(t) =  \frac{\e^{\top} \mu y_t \e}{\e^{\top} y_t \e}= 
\frac{\mu_1 y_{1,t} + (\mu_1 + \mu_2)y_{3,t}+ \mu_2 y_{2,t}}{w(t)},
\end{equation*}
where $w(t)$ is the implied  total variance.
  Note that 
\begin{align*}
w(t) &=\e^{\top} y_t \e =  y_{1,t} +2 y_{3,t}+y_{2,t}\\
y_{3,t} &= \frac{w(t) - y_{1,t} - y_{2,t}}{2}.
\end{align*}
Therefore,
\begin{equation*}
\mu_{\text{eff}}(t)  =  \frac{\mu_1 + \mu_2}{2} + \frac{\mu_1 - \mu_2}{2} \frac{y_{1,t} - y_{2,t}}{w(t)}.
\end{equation*}

Let us  express  the variables $y_{1,t}$, $y_{2,t}$ and $y_{3,t}$  in terms of 
 $w(t)$. This will, in turn, allow us to 
 express  $\mu_{\text{eff}}(t)$ as a function of $w(t)$.
In the Gaussian case we have the  following ODEs
\begin{align*}
\partial_t y_{1,t} + 2\mu_1 y_{1,t} &= \alpha^2 \sigma^2(t), \\
\partial_t y_{2,t} + 2\mu_2 y_{2,t} &= \beta^2 \sigma^2(t), \\
\partial_t y_{3,t} + (\mu_1+\mu_2) y_{3,t} &= \rho \alpha \beta \sigma^2(t),
\end{align*}
that yield the following two  equations 
\begin{align}
\label{y11}
u(t)-\left(\mu_1+\mu_2+4\rho
 \frac{\beta }{\alpha}\mu_1\right)y_{1,t}-\left(\mu_1+\mu _2\right)y_{2,t}
&=\left(1+2\rho
 \frac{\beta }{\alpha }\right)\partial_ty_{1,t}+\partial _ty_{2,t}\\
 \label{y21}
u(t)-\left(\mu_1+\mu_2+4\rho
\frac{\alpha }{\beta } \mu_2\right)y_{2,t}-\left(\mu_1+\mu_2\right)y_{1,t}
&=\left(1+2\rho\frac{\alpha}{\beta}\right)\partial_ty_{2,t}+\partial_ty_{1,t},
\end{align}
where  $u(t)$ is the function defined in~\eqref{u}.
Introducing  the vector 
$
Y(t) = 
\begin{pmatrix}
y_{1,t} \\
y_{2,t}
\end{pmatrix}
$
  rewrite equations~\eqref{y11} and~\eqref{y21} in the 
matrix form 
\begin{align*}
u(t)\e-Q Y(t)=P\frac{dY(t)}{dt},
\end{align*}
or, equivalently, 
\begin{equation}
\label{Y}
\frac{dY(t)}{dt} = (u(t) P^{-1}\e - M Y(t)),
\end{equation}
where
\begin{align*}
P &= 
\begin{pmatrix}
1+2\rho \beta/\alpha & 1 \\
1 & 1+2\rho \alpha/\beta
\end{pmatrix},
\\
Q &= 
\begin{pmatrix}
\mu_1+\mu_2 + \frac{4\beta \rho \mu_1}{\alpha} & \mu_1+\mu_2 \\
\mu_1+\mu_2 & \mu_1+\mu_2 + \frac{4\alpha \rho \mu_2}{\beta}
\end{pmatrix}
,\\
M &= P^{-1} Q.
\end{align*}
The solution of~\eqref{Y} is 
\[
Y(t) = \int_0^t e^{-M(t-s)} P^{-1}\e\, u(s) ds.
\]
Noting that 
$$
P^{-1}\e= 
\begin{pmatrix}
\alpha^2 \\
\beta^2
\end{pmatrix},
$$
we obtain that 
\begin{align*}
y_{1,T} &= \frac{\alpha^2}{2\gamma} \int_0^T
 \left( e^{(t-T)\lambda_1}(\gamma+\beta^2-\alpha^2+2) + 
 e^{(t-T)\lambda_2}(\gamma-\beta^2+\alpha^2-2) \right) u(t)\,dt, \\
y_{2,T} &= \frac{\beta^2}{2\gamma} \int_0^T
 \left( e^{(t-T)\lambda_1}(\gamma+\beta^2-\alpha^2-2) + 
 e^{(t-T)\lambda_2}(\gamma-\beta^2+\alpha^2+2) \right) u(t)\,dt,
\end{align*}
where $\gamma$, $\lambda_1$ and $\lambda_2$ are the quantities defined in~\eqref{gamma},~\eqref{lamb1}
and~\eqref{lamb2}, respectively.
Thus, we have obtained explicit formulas for $y_{1,T}$, $y_{2,T}$, and $y_{3,T}$
 (via the relation $2y_3=w-y_1-y_2$) in the Gaussian setting.

\bibliographystyle{plain}

\end{document}